 \title[Anytime-valid inference in N-of-1 trials]{Anytime-valid inference in N-of-1 trials}
\author{
\Name{Ivana Malenica}
\Email{imalenica@fas.harvard.edu}
\\
\addr Harvard University, United States
\AND
\Name{Yongyi Guo}
\Email{guo98@wisc.edu}\\
\addr University of Wisconsin-Madison, United States
\AND
\Name{Kyra Gan}
\Email{kyragan@cornell.edu}\\
\addr Cornell Tech, United States
\AND
\Name{Stefan Konigorski} \Email{stefan.konigorski@hpi.de}\\
\addr Harvard University, United States; Hasso Plattner Institute, Germany
}
\begin{document}

\maketitle

\begin{abstract}
App-based N-of-1 trials offer a scalable experimental design for assessing the effects of health interventions at an individual level. Their practical success depends on the strong motivation of participants, which, in turn, translates into high adherence and reduced loss to follow-up. One way to maintain participant engagement is by sharing their interim results. Continuously testing hypotheses during a trial, known as ``peeking", can also lead to shorter, lower-risk trials by detecting strong effects early. Nevertheless, traditionally, results are only presented upon the trial's conclusion. In this work, we introduce a potential outcomes framework that permits interim peeking of the results and enables statistically valid inferences to be drawn at any point during N-of-1 trials. Our work builds on the growing literature on \textit{valid confidence sequences}, which enables anytime-valid inference with uniform type-1 error guarantees over time. We propose several causal estimands for treatment effects applicable in an N-of-1 trial and demonstrate, through empirical evaluation, that the proposed approach results in valid confidence sequences over time. We anticipate that incorporating anytime-valid inference into clinical trials can significantly enhance trial participation and empower participants.

\end{abstract}

\begin{keywords}
N-of-1 trials, anytime-valid inference, design-based, confidence sequence, causal inference, personalized medicine
\end{keywords}

\section{Introduction}\label{sec::intro}

The statistical inference of individual causal effects of health interventions holds great importance for
many clinical and biomedical applications. 
In particular,
understanding which treatment and dosage are most effective for a particular patient lies at the heart of personalized medicine.
To this aim, different methodologies have been proposed. One popular approach involves the collection and analysis of extensive population-level datasets with the goal of estimating effects at the individual level. With suitable covariates and a clear understanding of disease mechanisms, it becomes feasible to potentially acquire individual-level effects~\citep{Shalit2017, Bica2020, Smith2020, VerstraetePA3450, Diemert2021}. 
Nevertheless, the practicality of this approach is often limited to specific applications. For instance, in cases like cancer, personalized signatures can be derived from genetic mutations. However, more often, causal effects can only be identified within specific subgroups~\citep{vanKruijsdijk2014, Zhang2017, vanAmsterdam2022, Msaouel2022}.
As a second approach, 
stemming from recent biotechnological advancements, personalized treatments have been developed directly for selected rare target diseases in personalized drug development~\citep{Kim2019, Seydel2023}. However, such approaches are still restricted to a selected class of drug targets and limited by resources and costs.

As a third approach, experimental studies can be designed to directly evaluate and compare the effectiveness of one or multiple treatments in a given person. These so-called N-of-1 trials have been established as the gold standard for inferring individual-level effects, and different guidelines have been proposed for their standardized application ~\citep{Nikles2015, Vohra2015, Porcino2020}. More formally, N-of-1 trials are multi-crossover randomized controlled trials (RCTs) in one person, hence one or more treatments are administered over time in a predefined, potentially randomized, sequence. 
Also, if the same N-of-1 trial is performed in multiple persons in a so-called series of N-of-1 trials, the trials can be jointly analyzed to yield efficient population-level treatment effect estimates~\citep{Zucker2010}. 
In order to achieve sufficient statistical power for inference, N-of-1 trials require frequent longitudinal measurements which has hindered their application in practice. Only recently, digital tools have been developed 
for setting up and executing N-of-1 trials. This, in turn, enables scalability across various studies, patients, and providers~\citep{Taylor2018, Daskalova2020, Zenner2022, Konigorski2022}.

Digital app-based N-of-1 trials provide a straightforward and expandable means for evaluating patient outcomes, whether actively reported by patients or passively collected sensor data. However, their success still relies on the retention and high adherence of the patients to the trial. To keep the trial participants engaged, apart from developing user-friendly interfaces for the apps, it is also important to provide frequent feedback. A classical study design involves collecting all trial data, conducting analysis, and subsequently reporting the results back to the participants. However, if patients are required to provide daily outcomes in a trial that runs for weeks or even months, without any intermediate feedback, patients might lose interest. On the other hand, ``peeking'' at the intermediate results (performing a hypothesis test before the end of the trial) introduces statistical biases. Thus, a framework for valid statistical inference is required at all time points of an N-of-1 trial in order to enable intermediate analysis.

\noindent
\textbf{Contributions.} In this work, we provide a statistical framework that allows anytime-valid inference in N-of-1 trials with intermediate analysis of the results. Our contributions are to provide a (i) potential outcomes framework for N-of-1 trials, (ii) formal definition of several causal estimands of interest in N-of-1 trials, and (iii) construction of confidence sequences that allow anytime-valid inference. Finally, we validate our approach empirically in simulation studies and compare it to existing state-of-the-art approaches. 

\subsection{Related Work}\label{sec::litreview}

The field of N-of-1 trials has originated from the medical domain, and continues to be largely driven by its clinical applications. As such, the existing literature has almost exclusively focused on an applied presentation, and omitted a more formal statistical definition of the study setup and estimands of interest. There exist some recent contributions: \cite{yang2021sample} give a formalization of different potential study designs and their consequences on design parameters. \cite{Daza2018, Daza2019} and \cite{Daza2022} present a counterfactual framework for single case studies which include both observational and experimental N-of-1 settings. They define individual-level target estimands and estimators in their work, which can include time trends and carryover effects. As a difference to these previous approaches, we consider design-based estimands based on the immediate causal effects conditional on the accrued history, and focus on constructing anytime-valid confidence sequences for the proposed target parameters.

Other related work has been published in the traditional RCTs literature that is relevant to our aim of enabling anytime-valid inference. In RCTs, 
interim analyses are often performed 
to evaluate safety and side effects, as well as to assess intermediate treatment effects that might warrant early termination of the trial based on predefined criteria  for treatment inferiority or superiority. Common approaches for such interim analyses include the O’Brien-Fleming~\citep{OBrien1979}, Haybittle-Peto~\citep{Haybittle1971, Peto1976} and Pocock~\citep{Pocock1977} methods, which aim to control the overall type I error across all interim tests by adjusting the respective critical test statistic values. While the Pocock method chooses the same alpha level at all interim tests, the O’Brien-Fleming and Haybittle-Peto method require stronger evidence at earlier interim points. 
In the CONSORT extension with guidelines for harm-related stopping rules, the O’Brien-Fleming method is recommended~\citep{Ioannidis2004}. \cite{Demets1994} and others have generalized these ideas through an ``alpha-spending function" which generates critical values such that the sum of probabilities of exceeding those values across the interim tests equal the type I error rate alpha. We note that all these approaches have been designed for classical population-level RCTs, and have not been evaluated in N-of-1 trials. As only exception, \cite{Selukar2021} considers a very specific situation, when a series of N-of-1 trials is sequentially monitored, interim analyses are few and pre-defined, and only summary statistics are available of each trial. 

Third, relevant work has originated from causal inference literature on sequential designs for single time series \citep{bojinov2019time, malenica2021adaptive, ham2022design}. \cite{malenica2021adaptive} define conditional estimand classes of interest and provide inference on them in an adaptive setting of a single time series. \cite{bojinov2019time} provide a potential outcome framework for single time series and define estimands of interest from the design-based perspective. Finally, we adapt and build on work by \cite{ham2022design}, who provide design-based confidence sequences for very long time series in a setting where treatment is randomized at each time point. In this work we focus specifically on the setup of N-of-1 trials as crossover experiments in treatment blocks, with N-of-1 trial-specific estimands.

\section{Statistical Formulation}\label{sec::formulation}

\subsection{Notation and Observed Data}\label{sec::data}

We consider the trajectory of a single individual in an N-of-1 trial with $K$ treatment periods (also denoted as ``periods" or ``blocks"). Suppose for each period $k\in[K]:=\{1, \ldots, K\}$, there are $T_k$ time points. We write $(k,t)$ to indicate the $t$-th time point of the treatment period $k$. We emphasize that, despite our focus on a single N-of-1 trial, our results generalize to a series of N-of-1 trials with multiple patients.

Let $O_{k,t} := (A_{k,t}, Y_{k,t}, W_{k,t})$ denote data for a single individual at time point $(k,t)$ including the treatment, outcome, and covariates. Specifically, at each time point $(k, t)$, one assigns a binary treatment $A_{k,t} \in \mathcal{A} := \{0,1\}$ to a patient, where $A_{k,t} = 1$ denotes the treatment and $A_{k,t} = 0$ the control (or alternative treatment). In N-of-1 trials, the same treatment is assigned throughout a  period (e.g., for block $k$, control is given at all time points $(1, \ldots, T_k)$). In an extreme case, one might randomize at each follow-up time (corresponding to the chronological observation), so $T_k=1$ for all $k \in [K]$. Notice that the number of time points within a block, $T_k$, can vary by $k$, allowing for different lengths of each treatment period. 
Once the treatment is assigned for block $k$, post-treatment outcome of interest $Y_{k,t} \in \mathcal{Y}$ and possibly a vector of other time-varying covariates ${W}_{k,t} \in \mathcal{W}$ are collected. Therefore, at each time point $t$, we assign treatment $A_{k,t}$ according to the period $k$, then collect $Y_{k,t}$, followed by $W_{k,t}$.

Let ${O}_{k,1:t} = (O_{k,1}, \ldots, O_{k,t})$ denote the data collected in period $k$ up to time $t$. As each $k$ contains $T_k$ data points, we define $O_k := {O}_{k, 1:T_k} = (O_{k,1}, \ldots, O_{k,T_k})$ as the full data collected in the $k$-th treatment period. Similarly, we write
${A}_{k} = (A_{k,1}, \ldots, A_{k,T_k})$ for the full and ${A}_{k,1:t} = (A_{k,1}, \ldots, A_{k,t})$ for the cropped sequence of treatments in block $k$. To clarify, we illustrate the proposed notation with a simple example where $K=2$ and $T_1=T_2=2$. Without putting any assumptions on the design of treatment blocks (and therefore treatment assignment), all possible treatment sequences for a single individual are as follows: $\{(1,1),(0,0)\}$,  $\{(0,0),(1,1)\}$, $\{(1,1),(1,1)\}$ and $\{(0,0),(0,0)\}$. Symbolically, we write the resulting treatment sequence
as $\{(A_{1,1}, A_{1,2}), (A_{2,1}, A_{2,2})\} = 
\{{A}_{1, 1:2}, A_{2,1:2}\}$.

It follows that data collected for a whole block $k$ is then  ${O}_{k} = ({A}_{k}, {Y}_{k}, {W}_{k})$, where ${Y}_{k} = (Y_{k,1}, \ldots, Y_{k,T_k})$ and  ${W}_{k} = (W_{k,1}, \ldots, W_{k,T_k})$. When no time-varying covariates are gathered beyond the outcome of interest, we represent this as ${W}_{k} = \emptyset$. 
We emphasize that ${W}_{k}$ (and the entire history, or some function of it, for that matter) can be used to inform treatment allocation in the subsequent block, $k+1$, allowing for adaptive treatment assignment. Finally, let $O_{0}$ be the baseline covariates obtained before the start of the trial. Without loss of generality, we assume $T_k = T$ for all blocks $k\in [K]$ ($T>1$), which aligns with the design of a canonical N-of-1 trial. We, however, emphasize that all of our results generalize to the setting where the $T_k$ values vary in each treatment period.

Below, we provide some 
essential definitions regarding time and block-specific variable collections used throughout the manuscript.
For a single individual, let $\Bar{O}_{k,t} = (O_0, O_{1,1}, \ldots, O_{k,t})$ denote the full history up to time $t$ of block $k$ (including baseline covariates), while $\underline{O}_{k,t} = (O_{k,t}, \ldots, O_{K,T})$ denotes all future variables from index $(k,t)$ to $(K,T)$. Then, the full \textit{trajectory} (or a \textit{time-series}) is presented by $O^{KT}:= \bar O_{K,T}=\{O_0, O_{1,1}, \ldots, O_{K,T}\}$. Similarly, we define all past and future collections of $A$, $Y$, and $W$. For example, we write $\Bar{A}_{k,t} = (A_{1,1}, \ldots, A_{k,t})$ for the sequence of treatments until the $t$-th time point of period $k$. Lastly, let $H^A_{k,t} := \Bar{O}_{k,t-1}$ be the full history of all variables until $A_{k,t}$. It then follows that $H^Y_{k,t} := (A_{k,t}, \Bar{O}_{k,t-1})$ and $H^W_{k,t} := (A_{k,t}, Y_{k,t}, \Bar{O}_{k,t-1})$ are the full variable histories until $Y_{k,t}$ and $W_{k,t}$.

\subsection{Likelihood of the Trajectory}\label{sec::likelihood}

We let $O^{KT} \sim P_0$, where $P_0$ denotes the true probability distribution of $O^{KT}$. Throughout the remainder of the text, we use capital letters to indicate random variables, and lowercase for their realizations. We use the naught subscript to denote \textit{true} probability distributions or components thereof.
Let $\mathcal{M}$ denote the \textit{statistical model} for the probability distribution of the data, which is nonparametric, beyond possible knowledge of the treatment mechanism (i.e., known randomization probabilities). We note that the true probability distribution of the data is an element of $\mathcal{M}$, and denote $P$ as any probability distribution such that $P \in \mathcal{M}$. Suppose that $P_0$ admits a density $p_0$ w.r.t. a dominating measure $\mu$ over $\mathcal{O}$ which can be written as the product measure $\mu = \times_{k=1,t=1}^{k=K,t=T} (\mu_A \times \mu_Y \times \mu_W)$, with $\mu_A$, $\mu_Y$, and $\mu_W$ measures over $\mathcal{A}$, $\mathcal{Y}$, and $\mathcal{W}$. The likelihood of $o^{KT}$ can be factorized according to the time-ordering as 
\begin{align}\label{eqn::likelihood}
    p_0(o^{KT}) &= p_{0,O_0}(o_{0}) \prod_{k=1}^{K} \prod_{t=1}^{T} p_{0,A}(a_{k,t} \mid h^A_{k,t}) \\
    &p_{0,Y}(y_{k,t} \mid h^Y_{k,t}) p_{0,W}(w_{k,t} \mid h^W_{k,t}), \nonumber
\end{align}
where $a_{k,t} \mapsto p_{0,A}(a_{k,t} \mid h^A_{k,t})$, $y_{k,t} \mapsto p_{0,Y}(y_{k,t} \mid h^Y_{k,t})$, and $w_{k,t} \mapsto p_{0,W}(w_{k,t} \mid h^W_{k,t})$ are conditional densities w.r.t. dominating measures $\mu_A$, $\mu_Y$, $\mu_W$. 

\section{Causal Effects for N-of-1 Trials}

The main design components of an N-of-1 trial include the (1) number of blocks $K$, (2) length of each block, $T$, and (3) choice of treatment allocation for treatment periods (pre-specified or randomized, type of randomization) \citep{yang2021sample}. If the treatment sequence is pre-specified before a trial, then an individual is assigned a specific treatment sequence deterministically (e.g., control-treatment-control-treatment). This is useful when one is interested in the effect of a specific treatment sequence, or wishes to avoid randomly generating unwanted treatment sequences (e.g., giving the same treatment across consecutive periods). Alternatively, one might generate treatment sequences by randomizing treatment allocation across blocks. In this work, we focus specifically on randomized treatment sequences. 

There are numerous randomization schemes for designing an N-of-1 trial \citep{yang2021sample}. In this work, we rely on (1) \textit{pairwise randomization}, where the order of two different treatments in a consecutive pair of treatment periods is randomized; (2) \textit{restricted randomization}, where treatment is randomly assigned with the restriction that the number of treatment and control periods is approximately the same (but treatment probability is never zero) and (3) \textit{unrestricted randomization}, where treatment is randomly assigned at each period. All of the listed schemes randomize on the block-level, and assign the same treatment at all time points within a block.

\subsection{Time-Series Potential Outcomes}\label{sec::potentialY}

We define $\Bar{a}_k = ({a}_{1}, \ldots, {a}_{k})$ as the \textit{treatment path} until period $k$, 
where we remind that ${a}_{k} = a_{k,1:T}$. In a point treatment setting, the treatment path is of length 1, and each study participant has $2^1$ potential outcomes. In an N-of-1 trial, however, we follow a single individual over time and administer $K$ different treatments, each of length $T$ (or more generally, $T_k$). For a binary treatment and unrestricted randomization, we then have $2^K$ different treatment paths that could have been observed. In \tableref{table::sequences}, we include the total number of possible treatment sequences for each considered randomization scheme, at both odd and even number of treatment periods.

\begin{table}
\floatconts
  {table::sequences}
  {\caption{Number of possible treatment paths generated under different randomization schemes.
  }}
  {\begin{tabular}{lll}
  \toprule
  \bfseries Randomization & \bfseries Odd $K$ & \bfseries Even $K$\\
  \midrule
  Pairwise & $2^{\frac{K+1}{2}}$ & $2^{\frac{K}{2}}$ \\
  Restricted & $\sim 2 {K \choose (K-1)/2}$ & $ \sim {K \choose K/2}$ \\
  Unrestricted & $2^{K}$ & $2^{K}$ \\
  \bottomrule
  \end{tabular}}
\end{table}

We define $Y_{k,t}(\Bar{a}_{k,t})$ as the potential outcome at time point $(k,t)$, which may depend on the full history of assigned treatments up until $(k,t)$. Note that we don't make assumptions on carryover or other time-dependent effects. Consequently, we denote $\Bar{Y}_{k,t}(\Bar{a}_{k,t})$ as the collection of potential outcomes up until $(k,t)$: $\Bar{Y}_{k,t}(\Bar{a}_{k,t}) = (Y_{1, 1}(\Bar{a}_{1, 1}), ... , Y_{1, T}(\Bar{a}_{1, T}),..., Y_{k,1}(\Bar{a}_{k,1}),..., Y_{k,t}(\Bar{a}_{k,t}))$. 
Further, let $Y_k(\Bar{a}_k)$ denote a summary potential outcome of treatment period $k$. In particular, $Y_k(\Bar{a}_k)$ depends on the potential outcomes for all time points $t$ within block $k$, i.e., $Y_k(\Bar{a}_k) := $
$f(Y_{k,1}(\Bar{a}_{k,1}), \ldots, Y_{k,T}(\Bar{a}_{k,T}))$, where $f$ is any function that takes as input the potential outcomes in block $k$. 
Finally, let $\Bar{Y}_k(\Bar{a}_k) = (Y_1(\Bar{a}_1), \ldots, Y_k(\Bar{a}_k))$ denote the counterfactual outcomes that would have been observed over time under treatment path $\Bar{A}_k = \Bar{a}_k$. When $f$ is the average function across the respective block, we have that $\Bar{Y}_k(\Bar{a}_k) = (Y_1(\Bar{a}_1), \ldots, Y_k(\Bar{a}_k)) = (1/T \sum_{t=1}^T Y_{1,t}(\Bar{a}_{1,t}), \ldots, 1/T \sum_{t=1}^T Y_{K,t}(\Bar{a}_{K,t}))$. 

Note that we assume that future potential outcomes do not cause past treatments \citep{granger1969}. We formalize this in Assumption \ref{ass::granger}. Finally, we do not make assumptions on the dimension of any of the defined potential outcomes.

\begin{assumption}[Granger Causality]\label{ass::granger} 
Let $\Bar{W}_T = \emptyset$. For each $k \in [K]$, we have that 
\begin{align*}
    &P(A_{k,1} = a_{k,1} \mid \bar{A}_{k-1} = \bar{a}_{k-1}, \Bar{Y}_K(\Bar{a}_K)) \\
    &= P(A_{k,1} = a_{k,1}  \mid \bar{A}_{k-1} = \bar{a}_{k-1}, \Bar{Y}_{k-1}(\Bar{a}_{k-1})).
\end{align*}
\end{assumption}

To illustrate the notation, consider a simple design with unrestricted randomization, $A \in\{0,1\}$, $K=2$ and $T=2$. For the first period $k=1$, there are $2^1=2$ potential outcomes: $Y_{1,2}(0,0)$ and $Y_{1,2}(1,1)$. For the second treatment period, there are $2^2=4$ potential outcomes at the end of the treatment block: $Y_{2,2}(0,0,0,0)$, $Y_{2,2}(0,0,1,1)$, $Y_{2,2}(1,1,0,0)$, and $Y_{2,2}(1,1,1,1)$, and the total number of potential outcomes at the end of the two treatment periods under unrestricted randomization is $2 + 4 = 6$ (more generally, $2(2^K-1)$ for a trial consisting of $K$ treatment periods). The total number of treatment paths, however, is $2^K = 2^2 = 4$, out of which we observe only one. See \cite{bojinov2019time} for more examples. Let's consider the block-average function for $f$, so that each $Y_k(\Bar{a}_k)$ is the average of all potential outcomes in block $k$. For the first period, we then have that $Y_1(\Bar{a}_1) = 1/2 \sum_{t=1}^{T=2} Y_{1,t}(\Bar{a}_{1,t})$, and $Y_1(\Bar{a}_1)$ is an average of $Y_{1,1}(0)$ and $Y_{1,2}(0,0)$, or $Y_{1,1}(1)$ and $Y_{1,2}(1,1)$. Similarly, for the second treatment period $k=2$, $Y_2(\Bar{a}_2) = 1/2 \sum_{t=1}^{T=2} Y_{2,t}(\Bar{a}_{2,t})$.
Therefore, the potential outcomes that would have been observed over time are $\Bar{Y}_2(\Bar{a}_2) = (Y_1(\Bar{a}_1), Y_2(\Bar{a}_2))$.

\subsection{Target Parameter}\label{sec::target}

In an N-of-1 trial, treatments are assigned based on the current treatment period. In this section, we establish several causal estimands that might be of interest in an N-of-1 trial, using the potential outcomes as defined in Subsection \ref{sec::potentialY}. The first target parameter is the \textit{immediate causal effect} (ICE) of treatment, as opposed to control, at period $k$. It is defined as the short-term (contemporaneous) effect of administering treatments during period $k$, assessed at time $(k,T)$, right after the last treatment in period $k$, conditional on the observed past. A formal definition is provided in \definitionref{def::ice}. Another parameter of interest is the time $t$-specific ICE, which represents the causal effect of assigning treatment from time $(k,1)$ to $(k,t)$. This target parameter hints at the effect of administering treatment for $t$ time points in a treatment period. We note that ICE is a special case of the time $t$-specific ICE where $t=T$.

\begin{definition}[Immediate Causal Effect]\label{def::ice}
\begin{equation*}
    \psi_k(\Bar{a}_{k-1}) = Y_k(\Bar{a}_{k-1}, {a}_k = \textbf{1}) - Y_k(\Bar{a}_{k-1}, {a}_k = \textbf{0})
\end{equation*}
for any $k \in [K]$. $\textbf{1}$, $\textbf{0}$ are vectors of dimension $T \times 1$.
\end{definition}

\begin{definition}[Time $t$-specific ICE]\label{def::icet}
\begin{align*}
    \psi_{k,t}(\Bar{a}_{k-1}) = Y_{k,t}&(\Bar{a}_{k-1}, 
    {a}_{k,1:t} = \textbf{1}_t) \\
    &- Y_{k,t}(\Bar{a}_{k-1}, {a}_{k,1:t} = \textbf{0}_t)
\end{align*}
for any $k \in [K]$, where $\textbf{1}_t$ and $\textbf{0}_t$ are vectors of dimension $t \times 1$, and ${a}_{k,1:t} = (a_{k,1}, \ldots, a_{k,t})$.
\end{definition}

Notice that the causal estimands in both \definitionref{def::ice} and \definitionref{def::icet} are functions of the entire treatment path. As such, they include the \textit{carryover effect} from the past treatment period assignment in addition to the effect of period $k$. We also emphasize that they are \textit{data-adaptive} parameters --- the estimand changes as a function of the observed past and/or treatment path. Defining causal effects conditional on history might seem unusual from the classical causal inference perspective. However, such data-adaptive approach allows us to define causal effects (1) for long time-series, (2) with valid inference (as the central limit theorem still holds), and (3) without any additional assumptions on the time-series structure \citep{bojinov2019time}. Data-adaptive target parameters in longitudinal settings have been previously described from both super population and design-based perspectives \citep{bojinov2019time,malenica2021adaptive}.
Lastly, we define another target parameter of interest in N-of-1 trials in  \definitionref{def::aice}, the Average Immediate Causal Effect (AICE) --- the running average of treatment effects over blocks.
A similar parameter can also be defined for the average over time $t$-specific ICE.

\begin{definition}[AICE]\label{def::aice} For any $k\in[K]$,
\begin{align*}
    \psi_\mathrm{AICE}^k &= \frac{1}{k} \sum_{j=1}^k \psi_j(\Bar{a}_{j-1}). 
\end{align*}
\end{definition}

\subsection{Estimation}\label{sec::estimation}
In this paper, we focus on the design-based approach to causal inference. Within the design-based paradigm, the full set of potential outcomes is fixed and always conditioned on; as such, the only source of randomness comes from the treatment assignment. Let $\mathcal{F}_{k}$
denote the filtration which contains all observed data up to time $(k,T)$ conditional on $\Bar{Y}_k(\Bar{a}_k)$ ($\bar{O}_{k} = \{(A_{j,t},Y_{j,t}, W_{j,t})\}_{j=1,t=1}^{j=k,t=T}$), and all the potential outcomes ($\Bar{Y}_K(\Bar{a}_K)$). The filtration $\mathcal{F}_{k}$ obeys the nesting property where $\mathcal{F}_{k} \subset \mathcal{F}_{k+1}$ for all $k$. At the beginning of each period $k$, we randomly assign treatment with probability $g(A_{k,1}) = P(A_{k,1} \mid \mathcal{F}_{k-1})$. Note that $g(A_{k,1}) = g(A_{k,2}) = \ldots = g(A_{k,T})$, as the probability of treatment is the same for each time point $t$ in block $k$. By Assumption \ref{ass::granger}, it follows that $g(A_{k,1}) = P(A_{k,1} \mid \mathcal{F}_{k-1}) = P(A_{k,1} \mid \bar{A}_{k-1} = \bar{a}_{k-1}, \Bar{Y}_{k-1}(\Bar{a}_{k-1}))$. We emphasize that if treatment is assigned independently of the past, then $g(A_{k,1}) = P(A_{k,1})$. 


To estimate ICE and AICE, we focus on the time-series version of the Horvitz-Thomson and the stabilized IPTW (Hájek) estimator in this work \citep{horvitz1952,hajek1971, robins2000,hirano2003,imbens_rubin_2015}.
To enable statistical inference, we assume there is a positive probability of treatment and control at every period $k$. Formally stated in Assumption \ref{ass::positivity}, the positivity assumption excludes the pre-determined treatment periods occasionally used in N-of-1 trials.

\begin{assumption}[Positivity]\label{ass::positivity} 
For every $k \in [K]$,
    $$0 < g(A_{k,1}) < 1.$$
\end{assumption}

Under Assumption \ref{ass::positivity}, we can estimate ICE and AICE using the observed data. The IPTW estimator of $\psi_k(\Bar{a}_{k-1})$,
denoted as $\hat{\psi}_k$, is then defined as
\begin{align}\label{eqn::est_ice}
   \hat{\psi}_k := &\frac{\mathds{1}({A}_{k,1} = 1) 
   f(Y_{k,1:T})
   }{g(A_{k,1})} - \frac{\mathds{1}({A}_{k,1} = 0) f({Y}_{k, 1:T})}{1-g(A_{k,1})}. \nonumber
\end{align}
Recall $f$ is any function that takes 
data of block $k$ as input (in the estimator, observed data at time points $(k,1:T)$).
The variance estimator is defined as
\begin{align}
   \hat{\sigma}_k^2 := &\frac{\mathds{1}({A}_{k,1} = 1) f({Y}_{k, 1:T})^2}{g(A_{k,1})^2} +  \frac{\mathds{1}({A}_{k,1} = 0) f({Y}_{k, 1:T})^2}{(1-g(A_{k,1}))^2}. \nonumber
\end{align}
\lemmaref{lemma::ICEpsi} establishes that the proposed estimator is unbiased, and derives its variance over the randomization (proof in Appendix \ref{apd:proof1}).

\begin{lemma}[Properties of the ICE Estimator]\label{lemma::ICEpsi}
Under Assumption \ref{ass::positivity}, it follows that 
\begin{equation*}
    \mathbb{E}(\hat{\psi}_k - \psi_k(\Bar{a}_{k-1}) | \mathcal{F}_{k-1}) = 0
\end{equation*}
and 
\begin{align*}
    \text{Var}(\hat{\psi}_k - \psi_k(\Bar{a}_{k-1}) | \mathcal{F}_{k-1}) &\leq \mathbb{E}(\hat{\sigma}^2_k | \mathcal{F}_{k-1}).
\end{align*}
\end{lemma}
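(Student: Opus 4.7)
The key observation is that in the design-based paradigm $\mathcal{F}_{k-1}$ contains the entire array of potential outcomes $\bar{Y}_K(\bar{a}_K)$, so conditionally on $\mathcal{F}_{k-1}$ the only randomness left in $\hat{\psi}_k$ comes through $A_{k,1}$, and moreover $g_k := g(A_{k,1}) = P(A_{k,1}=1\mid\mathcal{F}_{k-1})$ is $\mathcal{F}_{k-1}$-measurable. I would begin by using the block structure (treatment is constant within a period, so $A_{k,1:T}=(A_{k,1},\ldots,A_{k,1})$) together with consistency to rewrite the observed block summary as
\begin{equation*}
    f(Y_{k,1:T}) = \mathds{1}(A_{k,1}=1)\,Y_k(\bar{a}_{k-1},\mathbf{1}) + \mathds{1}(A_{k,1}=0)\,Y_k(\bar{a}_{k-1},\mathbf{0}),
\end{equation*}
so that, using $\mathds{1}(A_{k,1}=1)\mathds{1}(A_{k,1}=0)=0$,
\begin{equation*}
    \hat{\psi}_k = \frac{\mathds{1}(A_{k,1}=1)}{g_k}\,Y_k(\bar{a}_{k-1},\mathbf{1}) - \frac{\mathds{1}(A_{k,1}=0)}{1-g_k}\,Y_k(\bar{a}_{k-1},\mathbf{0}).
\end{equation*}
Assumption \ref{ass::positivity} is exactly what guarantees that both denominators are strictly positive.

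For unbiasedness, I would take the conditional expectation term by term. Since both $Y_k(\bar{a}_{k-1},\mathbf{1})$ and $Y_k(\bar{a}_{k-1},\mathbf{0})$ as well as $g_k$ are $\mathcal{F}_{k-1}$-measurable, they pull out, and $\mathbb{E}[\mathds{1}(A_{k,1}=1)\mid\mathcal{F}_{k-1}] = g_k$, $\mathbb{E}[\mathds{1}(A_{k,1}=0)\mid\mathcal{F}_{k-1}] = 1-g_k$. The propensity cancels on each arm, leaving $\mathbb{E}[\hat{\psi}_k\mid\mathcal{F}_{k-1}] = Y_k(\bar{a}_{k-1},\mathbf{1}) - Y_k(\bar{a}_{k-1},\mathbf{0}) = \psi_k(\bar{a}_{k-1})$, which is the first claim.

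For the variance bound, I would use that $\psi_k(\bar{a}_{k-1})$ is $\mathcal{F}_{k-1}$-measurable to write $\mathrm{Var}(\hat{\psi}_k-\psi_k(\bar{a}_{k-1})\mid\mathcal{F}_{k-1}) = \mathrm{Var}(\hat{\psi}_k\mid\mathcal{F}_{k-1}) = \mathbb{E}[\hat{\psi}_k^2\mid\mathcal{F}_{k-1}] - \psi_k(\bar{a}_{k-1})^2$. Squaring $\hat{\psi}_k$ as displayed above, the cross term vanishes because the two indicators are mutually exclusive, and the two surviving squared terms are exactly
\begin{equation*}
    \hat{\psi}_k^2 = \frac{\mathds{1}(A_{k,1}=1)\,f(Y_{k,1:T})^2}{g_k^2} + \frac{\mathds{1}(A_{k,1}=0)\,f(Y_{k,1:T})^2}{(1-g_k)^2} = \hat{\sigma}_k^2,
\end{equation*}
so $\mathbb{E}[\hat{\psi}_k^2\mid\mathcal{F}_{k-1}] = \mathbb{E}[\hat{\sigma}_k^2\mid\mathcal{F}_{k-1}]$. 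Dropping the nonnegative $\psi_k(\bar{a}_{k-1})^2$ gives the inequality.

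\textbf{Main obstacle.} There is no deep difficulty here; the only step that warrants care is justifying the replacement of the observed $f(Y_{k,1:T})$ by a potential-outcome expression that is itself $\mathcal{F}_{k-1}$-measurable on each arm. This relies on (i) consistency between $Y_{k,t}$ and the potential outcomes indexed by the realized treatment path, (ii) the block design forcing $A_{k,t}=A_{k,1}$ for all $t\in[T]$, and (iii) Assumption \ref{ass::granger}, which ensures $g_k$ depends only on $\bar{A}_{k-1}$ and $\bar{Y}_{k-1}(\bar{a}_{k-1})$ and is thus $\mathcal{F}_{k-1}$-measurable. Once this bookkeeping is in place, the remainder is just mutual exclusivity of the indicators and linearity of conditional expectation.
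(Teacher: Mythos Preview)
Your proof is correct. The unbiasedness argument is essentially identical to the paper's. For the variance bound, however, you take a cleaner route: you observe the \emph{pointwise} algebraic identity $\hat{\psi}_k^2 = \hat{\sigma}_k^2$ (from $\mathds{1}(A_{k,1}=1)\mathds{1}(A_{k,1}=0)=0$ and $\mathds{1}^2=\mathds{1}$), which immediately gives $\mathrm{Var}(\hat{\psi}_k\mid\mathcal{F}_{k-1}) = \mathbb{E}[\hat{\sigma}_k^2\mid\mathcal{F}_{k-1}] - \psi_k(\bar{a}_{k-1})^2 \le \mathbb{E}[\hat{\sigma}_k^2\mid\mathcal{F}_{k-1}]$. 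The paper instead expands $\mathbb{E}[(\hat{\psi}_k-\psi_k)^2\mid\mathcal{F}_{k-1}]$ directly into three terms, computes each one (e.g.\ $\mathbb{E}[(\mathds{1}_1-g_k)^2\mid\mathcal{F}_{k-1}]=g_k(1-g_k)$ for the first), and then simplifies to the closed form $Y_k(\bar{a}_{k-1},\mathbf{1})^2/g_k + Y_k(\bar{a}_{k-1},\mathbf{0})^2/(1-g_k) - \psi_k(\bar{a}_{k-1})^2$, dropping the last (nonpositive) term via $a^2+b^2\ge 2ab$. Both arguments arrive at the same exact variance expression before passing to the inequality; your shortcut just avoids the three-term expansion. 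The paper's longer computation does make the exact variance formula explicit, which can be useful if one later wants to quantify the conservativeness of $\hat{\sigma}_k^2$, but for the lemma as stated your argument suffices and is tidier.
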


The running average immediate effect over treatment periods, i.e. the AICE, can be estimated by 
\begin{equation}\label{eqn::est_aice}
\hat{\psi}_\mathrm{AICE}^k = 1/k \sum_{j=1}^k \hat{\psi}_j.
\end{equation}
The unbiasedness of $\hat{\psi}_\mathrm{AICE}^k$ follows trivially from \lemmaref{lemma::ICEpsi}. To stabilize the variance of the IPTW, we also investigate the Hájek estimator of AICE, denoted as $\tilde{\psi}_\mathrm{AICE}^{k}$ and presented in \equationref{eqn::est_aice_hajek}. We allocate the study of the Hájek estimator of AICE to \appendixref{apd:hajek}. 
\begin{align}\label{eqn::est_aice_hajek}
\tilde{\psi}_\mathrm{AICE}^{k} := &\frac{\sum_{j=1}^k \mathds{1}({A}_{j,1} = 1) f({Y}_{j, 1:T})/g(A_{j,1})}{\sum_{j=1}^k \mathds{1}({A}_{j,1} = 1)/g(A_{j,1})} \\
&-  \frac{\sum_{j=1}^k \mathds{1}({A}_{j,1} = 0) f({Y}_{j, 1:T})/(1 - g(A_{j,1}))}{\sum_{j=1}^k\mathds{1}({A}_{j,1} = 0)/(1 - g(A_{j,1}))}, \nonumber
\end{align}
with the corresponding variance estimator:
\begin{align}\label{eqn::var_aice_hajek}
   \tilde{\sigma}_\mathrm{AICE}^2 := &\frac{\sum_{j=1}^k \mathds{1}({A}_{j,1} = 1) f({Y}_{j, 1:T})^2/g(A_{j,1})^2}{\sum_{j=1}^k \mathds{1}({A}_{j,1} = 1)/g(A_{j,1})^2} \\
   &+  \frac{\sum_{j=1}^k \mathds{1}({A}_{j,1} = 0) f(Y_{j, 1:T})^2/(1 - g(A_{j,1}))^2}{\sum_{j=1}^k \mathds{1}({A}_{j,1} = 0)/(1 - g(A_{j,1}))^2}. \nonumber
\end{align}

\section{Confidence Sequences}\label{sec::confidence_seq}

We now introduce design-based asymptotic confidence sequences for N-of-1 trials. First, we define a \textit{confidence sequence} as a sequence of confidence intervals that are uniformly valid over time (also known as \textit{anytime-valid}). We say $(I_k)_{k=1}^K$ is a valid confidence sequence with type-1 error $\alpha$ (or level $1-\alpha$) for the target parameter $(\psi^k_\mathrm{AICE})_{k=1}^K$ if for any data-dependent stopping rule at $1 \leq \tau \leq K$, 
\begin{equation}\label{eqn::alphavalid}
    P(\exists k\in \{1, \ldots, \tau\}\thinspace s.t. \thinspace\psi^k_\mathrm{AICE} \notin I_k ) \leq \alpha.
\end{equation}
Under \equationref{eqn::alphavalid}, we can perform valid inference through each $I_{k}$. Furthermore, one can terminate a trial as soon as a statistically significant effect is detected with $\psi^k_\mathrm{AICE} \notin I_k$, allowing for "peeking" during the trial duration.

Anytime-valid inference was first introduced by \cite{wald1945}. Since then, significant advancements have been made in developing confidence sequences under minimal regularity conditions \citep{howard2021time,bibaut2023nearoptimal}. Two of the key contributions include the idea of time-uniform analogues of asymptotic confidence intervals (known as \textit{asymptotic confidence sequences}), and their extension to design-based framework for anytime-valid causal inference \citep{waudbysmith2023, ham2022design}. For clarity, we provide a semi-formal definition as \definitionref{def::acs}. Informally, asymptotic confidence sequences are valid confidence sequences as the number of time-points grows. While practically this might mean we don't have valid coverage at early times, this concern is alleviated by the (i) N-of-1 design, where each period is of length $T>1$, and by the (ii) upper bound variance estimator introduced in Section \ref{sec::estimation}. 

\begin{definition}[Asymptotic Confidence Sequence]\label{def::acs}A sequence of intervals $(I_k)_{k=1}^K$ is a level $1-\alpha$ asymptotic confidence sequence for the target parameter sequence $(\psi_{\mathrm{AICE}}^k)_{k=1}^K$ if there exists a non-asymptotic confidence sequence $(I_k')_{k=1}^K$ of level $1-\alpha$ such that each interval $I_k$ shares the center with $I_k'$, and that $width(I_k) / width(I_k')\rightarrow 1\thickspace a.s.$. Moreover, we say $(I_k)_k$ has approximation rate $R$ if $width(I_k) - width(I_k') = O_{a.s.}(R)$.
\end{definition}

We now formally introduce the asymptotically valid confidence sequences for the target parameter sequence of the running average immediate causal effect, $(\psi_{\mathrm{AICE}}^k)_k$. Before stating \theoremref{thm::ConfidenceSeq} (proof in \appendixref{apd:theorem1}), we need two more assumptions. In Assumption \ref{ass::boundedpo}, we assume there is an unknown, possibly extreme constant $M$ which bounds the realized potential outcomes. As $M$ can be arbitrarily large and realizations are bounded, we consider Assumption \ref{ass::boundedpo} a mild regularity condition. Assumption \ref{ass::nonvanishingvar} concerns the behavior of the variance, and is satisfied as long as potential outcomes do not vanish over time.  

\begin{assumption}[Bounded Potential Outcomes]\label{ass::boundedpo}There exists a constant $M \in \mathbb{R}$ such that for any $k \in [K]$, and any treatment path $\bar a_k$, $|Y_k(\bar a_k)|\leq M$.
\end{assumption}

\begin{assumption}[Non-vanishing Variance]\label{ass::nonvanishingvar}Let $\tilde S_{k}:= \sum_{j=1}^k \sigma_{j}^2$, where $\sigma_{j}^2:= \frac{Y_j(\bar a_{j-1}, \mathbf{1}_T)^2}{g(A_{j, 1})} + \frac{Y_j(\bar a_{j-1}, \mathbf{0}_T)^2}{1-g(A_{j, 1})}$. Then, $\tilde S_{k}\rightarrow \infty$ as $k\rightarrow \infty$ a.s.
\end{assumption}


\begin{theorem}\label{thm::ConfidenceSeq}
Let $S_{k} = \sum_{j=1}^k\hat \sigma_j^2$. Under Assumptions \ref{ass::positivity}, \ref{ass::boundedpo} and \ref{ass::nonvanishingvar}, and for any constant $\eta>0$,
$$
\frac1k \sum_{j=1}^k\hat \psi_j\pm \frac1k \sqrt{\frac{\eta^2S_{k}+1}{\eta^2} \log\bigg(\frac{\eta^2S_{k}+1}{\alpha^2}\bigg)}
$$
forms a valid $(1-\alpha)$ asymptotic confidence sequence for the target parameter sequence $(\psi_{\mathrm{AICE}}^k)_k$, with approximation rate $o(\sqrt{\tilde S_{k}\log \tilde S_{k}}/k)$.
\end{theorem}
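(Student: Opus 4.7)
The plan is to recast the problem as a time-uniform deviation bound for a martingale and then combine a Gaussian-mixture boundary with a strong martingale invariance principle, following the design-based framework of \cite{waudbysmith2023} and \cite{ham2022design}.

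First, I would verify that $D_j := \hat\psi_j-\psi_j(\bar a_{j-1})$ is a bounded martingale difference sequence with respect to $(\mathcal{F}_j)_j$: Lemma \ref{lemma::ICEpsi} gives $\mathbb{E}[D_j\mid\mathcal{F}_{j-1}]=0$ and $\text{Var}(D_j\mid\mathcal{F}_{j-1})\leq \mathbb{E}[\hat\sigma_j^2\mid\mathcal{F}_{j-1}]=\sigma_j^2$, while Assumptions \ref{ass::positivity} and \ref{ass::boundedpo} ensure $|D_j|$ is a.s.\ bounded. Since $\tfrac{1}{k}\sum_{j=1}^k\hat\psi_j-\psi_{\mathrm{AICE}}^k=\tfrac{1}{k}M_k$ with $M_k:=\sum_{j=1}^k D_j$, the theorem reduces to a uniform envelope for $M_k/k$.

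Next, I would obtain the target boundary on a Brownian path via a Gaussian-mixture supermartingale. Mixing $\exp(\lambda W_v-\lambda^2 v/2)$ against $\lambda\sim\mathcal{N}(0,\eta^2)$ produces
$$N(v,x)=\frac{1}{\sqrt{\eta^2 v+1}}\exp\!\left(\frac{\eta^2 x^2}{2(\eta^2 v+1)}\right),$$
and solving $N(\tilde S_k,W(\tilde S_k))\leq 1/\alpha$ for the Brownian value yields exactly $|W(\tilde S_k)|\leq \sqrt{(\eta^2\tilde S_k+1)/\eta^2\cdot \log((\eta^2\tilde S_k+1)/\alpha^2)}$; Ville's inequality turns this into a simultaneous envelope in $k$ with probability $\geq 1-\alpha$. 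This is the theorem's boundary, but with the oracle cumulative variance $\tilde S_k$ and evaluated on the wrong process. A Strassen-type strong invariance principle for bounded MDS, valid under Assumption \ref{ass::nonvanishingvar}, furnishes a coupling with $|M_k-W(\tilde S_k)|=o(\sqrt{\tilde S_k\log\tilde S_k})$ a.s. Finally, I would swap $\tilde S_k$ for $S_k$: since $\hat\sigma_j^2-\sigma_j^2$ is a bounded MDS, a martingale strong law gives $S_k/\tilde S_k\to 1$ a.s., and the continuity and monotonicity of $v\mapsto\sqrt{(\eta^2 v+1)/\eta^2\cdot\log((\eta^2 v+1)/\alpha^2)}$ imply that the induced change of width is of the same $o(\sqrt{\tilde S_k\log\tilde S_k})$ order. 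Assembling these pieces and dividing by $k$ produces the asymptotic confidence sequence with the advertised approximation rate, with the non-asymptotic object required by Definition \ref{def::acs} being the Brownian mixture CS transported through the coupling.

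The main obstacle will be the strong-invariance step, because the conditional variances $\sigma_j^2$ are random and history-adaptive (driven by $g(A_{j,1})$ and by the block-specific potential outcomes), which rules out any direct i.i.d.\ approximation and forces one to appeal to a martingale embedding (Shorack--Wellner or KMT-type) that tracks the irregular quadratic-variation clock $\tilde S_k$. This is the technical core used by \cite{ham2022design} in their anytime-valid results for long treatment time series, and the N-of-1 block structure is essentially a notational reorganization of their setting, so adapting their coupling should be routine once the bounded-increment bookkeeping in step one is in hand.
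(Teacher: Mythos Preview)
Your proposal is correct and follows essentially the same route as the paper: both construct a Gaussian-mixture (super)martingale, apply Ville's inequality to obtain the boundary in terms of the oracle variance $\tilde S_k$, invoke Strassen's strong approximation to transfer the Brownian/Gaussian envelope to the martingale $\sum_j(\hat\psi_j-\psi_j)$, and finally replace $\tilde S_k$ by $S_k$. The only cosmetic differences are that the paper writes the approximating process as $\sum_j\sigma_j X_j$ with i.i.d.\ standard normals rather than as $W(\tilde S_k)$, and treats $S_k/\tilde S_k\to 1$ as an additional hypothesis borrowed from \cite{waudbysmith2023} rather than deriving it (as you do) from the fact that $\hat\sigma_j^2-\sigma_j^2$ is a bounded MDS.
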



\begin{figure}[H]
    \centering    \includegraphics[width=0.8\linewidth]{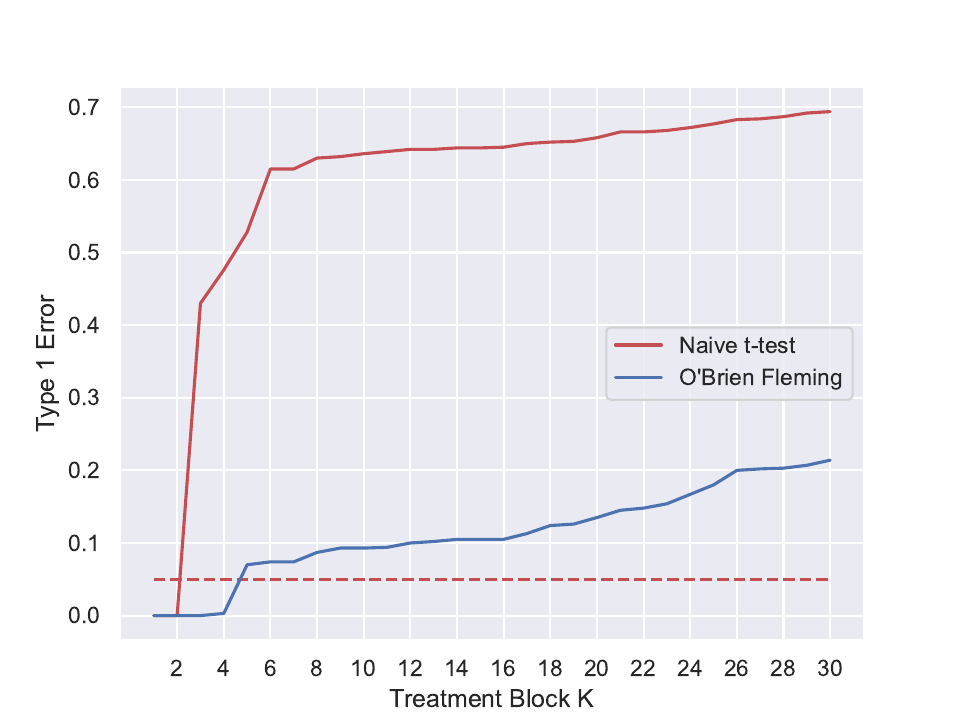}
    \caption{Empirical type I error of a naive t-test and of the O’Brien-Fleming approach 
    of 1000 N-of-1 trials in the unrestricted randomization setting. The dashed line represents $\alpha=0.05$.}
    \label{fig:baseline}
\end{figure}

\section{Experiments}\label{sec::experiments}



In the first experiment, we generate 1000 independent N-of-1 trials for an individual under the null hypothesis and illustrate the need for novel approaches to construct valid confidence sequences. In each independent trial, we set $K=30$ blocks and $T=10$ time points within each block. For simplicity, there is no treatment effect or covariates. We consider unrestricted randomization with 50\% probability for each treatment. 

First, we demonstrate the inflated type-1 error of a naive t-test (which constructs a confidence interval according to the two-sample t-test with level $\alpha = 0.05$ at each block $K$), as well as the O'Brien-Fleming approach \citep{OBrien1979}. In Figure \ref{fig:baseline}, we observe that the naive approach quickly accumulates type I errors, reaching 0.6 only after 6 treatment blocks. A naive application of the O'Brien-Fleming approach yields better performance than the simple t-test, but still result in inflated type I errors that increase over time.

The second experiment aims to visualize the performance of the proposed confidence sequences and demonstrate empirically that the sequences achieve both early stopping and time uniform coverage. For this purpose, we generate 1000 independent N-of-1 trials, each with a decreasing treatment effect size of $5 + 1/k$,
no carry-over effects, $K=100$ treatment blocks, and $T=10$ data points within each block. We consider both the unrestricted and pairwise randomization schemes, where we construct confidence sequences of AICE using our proposed IPTW (Theorem \ref{thm::ConfidenceSeq} derived from Equation \eqref{eqn::est_aice}) and stabilized IPTW estimators (Equation \eqref{eqn::est_aice_hajek}). 
In Figure \ref{fig:single_runs1} and Figure \ref{fig:single_runs2} in Appendix \ref{apd:supplfig}, 
we illustrate the confidence sequences constructed by the proposed algorithms in a single run.

\begin{figure}[H]
    \centering
    \includegraphics[width=0.85\linewidth]{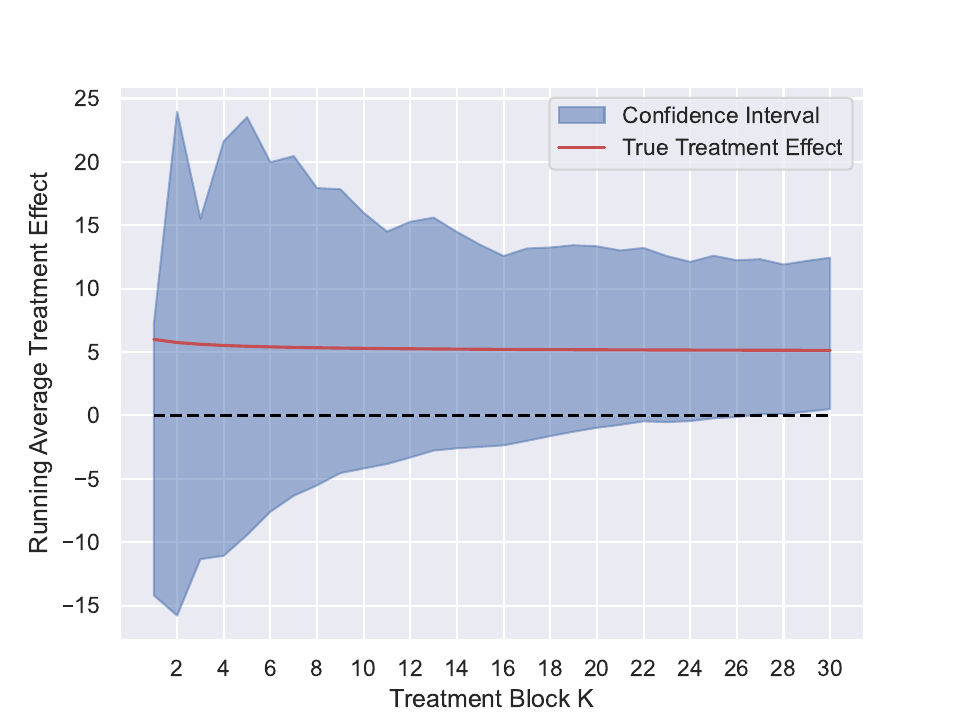}
    \includegraphics[width=0.85\linewidth]{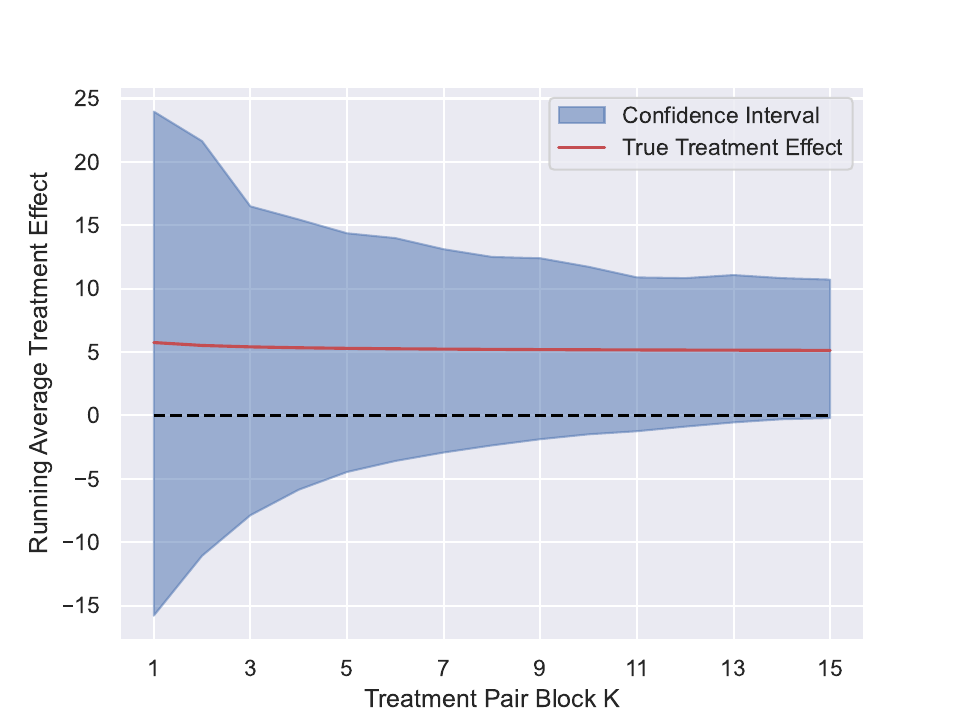}
    \caption{All-time valid confidence intervals of AICE obtained by IPTW in a single run at $\alpha=0.05$. The dashed line represents the zero (null) line. Top row: unrestricted randomization scheme. Bottom row: pairwise randomization scheme.}
    \label{fig:single_runs1}
\end{figure}

In Table \ref{tab:stopping_time_coverage}, we compute the average stopping time and time-uniform coverage proportion among the 1000 independent trials. Here the stopping time is set as the earliest block at which the confidence interval excludes 0, up to the 100th block. The time-uniform coverage refers to the proportion among the random experiments where \emph{all} confidence intervals cover the true treatment effect. 
%
%
%
%
We observe that the proposed confidence sequences enjoy high coverage and their widths are decreasing over time as expected.
The stabilized estimator shows a slightly smoother interval sequence and slightly longer average stopping times, though overall both estimators show similar empirical behavior in our considered scenarios. These insights hold for both randomization schemes.

\begin{table}[H]
    \centering
    \begin{tabular}{ccc}
    \toprule
    &   Avg. Stopping Time & Coverage \\\midrule
         IPTW & 32.50 (7.40) & 0.959\\
         S-IPTW & 31.28 (7.41) & 1.0\\
         Pair IPTW & 31.96 (2.98) & 1.0 \\
         Pair S-IPTW & 33.32 (3.96) & 1.0
         \\\bottomrule
    \end{tabular}
    \caption{Average stopping time and uniform time coverage proportion over 1000 independent trials for the proposed confidence sequences under the unrestricted and pairwise randomization setting. 
    `S-IPTW' denotes the stabilized IPTW. 
    `Pair IPTW' and `Pair S-IPTW' represent the IPTW and stabilized IPTW, respectively, in the pairwise randomization setting.
    }
    \label{tab:stopping_time_coverage}
\end{table}







\section{Discussion}\label{sec::discussion}

In this work, we provide a statistical framework that enables anytime-valid inference in N-of-1 trials for intermediate peeking at the results and analyzing them. We validate our approach in simulation studies and compare it to existing state-of-the-art approaches. The results indicate that recommended methods for population-level RCTs provide invalid confidence sequences for N-of-1 trials. Our proposed approach, however, results in valid confidence sequences.

This contribution adds to the literature on interim analysis and anytime-valid inference, with a specific focus on N-of-1 trials. Our proposed estimands allow traditional N-of-1 trials to make use of our developed methodology, and allow participants of digital N-of-1 trials to start looking at the results while the trial is ongoing. 
There is a high need for a user-friendly solution that allows peeking, and we expect that this will enable the further widespread use of N-of-1 trials.

Follow-up work can evaluate our developed methodology across a broader range of scenarios and apply it to a clinical N-of-1 trial. Further, it can be extended to population-level analyses of (adaptive) series of N-of-1 trials.



\bibliography{anytime_valid_inference_N_of_1}

\begin{thebibliography}{46}
\providecommand{\natexlab}[1]{#1}
\providecommand{\url}[1]{\texttt{#1}}
\expandafter\ifx\csname urlstyle\endcsname\relax
  \providecommand{\doi}[1]{doi: #1}\else
  \providecommand{\doi}{doi: \begingroup \urlstyle{rm}\Url}\fi

\bibitem[Bibaut et~al.(2023)Bibaut, Kallus, and Lindon]{bibaut2023nearoptimal}
Aurelien Bibaut, Nathan Kallus, and Michael Lindon.
\newblock Near-optimal non-parametric sequential tests and confidence sequences
  with possibly dependent observations, 2023.
\newblock arXiv:2212.14411.

\bibitem[Bica et~al.(2020)Bica, Alaa, Lambert, and van~der Schaar]{Bica2020}
Ioana Bica, Ahmed~M. Alaa, Craig Lambert, and Mihaela van~der Schaar.
\newblock From real-world patient data to individualized treatment effects
  using machine learning: Current and future methods to address underlying
  challenges.
\newblock \emph{Clinical Pharmacology {\&} Therapeutics}, 109\penalty0
  (1):\penalty0 87--100, June 2020.
\newblock \doi{10.1002/cpt.1907}.
\newblock URL \url{https://doi.org/10.1002/cpt.1907}.

\bibitem[Bojinov and Shephard(2019)]{bojinov2019time}
Iavor Bojinov and Neil Shephard.
\newblock Time series experiments and causal estimands: exact randomization
  tests and trading.
\newblock \emph{Journal of the American Statistical Association}, 114\penalty0
  (528):\penalty0 1665--1682, 2019.

\bibitem[Daskalova et~al.(2020)Daskalova, Yoon, Wang, Araujo, Beltran, Nugent,
  McGeary, Williams, and Huang]{Daskalova2020}
Nediyana Daskalova, Jina Yoon, Yibing Wang, Cintia Araujo, Guillermo Beltran,
  Nicole Nugent, John McGeary, Joseph~Jay Williams, and Jeff Huang.
\newblock {SleepBandits}: Guided flexible self-experiments for sleep.
\newblock In \emph{Proceedings of the 2020 {CHI} Conference on Human Factors in
  Computing Systems}. {ACM}, April 2020.
\newblock \doi{10.1145/3313831.3376584}.
\newblock URL \url{https://doi.org/10.1145/3313831.3376584}.

\bibitem[Daza(2018)]{Daza2018}
Eric~J. Daza.
\newblock Causal analysis of self-tracked time series data using a
  counterfactual framework for n-of-1 trials.
\newblock \emph{Methods of Information in Medicine}, 57\penalty0 (S
  01):\penalty0 e10--e21, 2018.
\newblock URL \url{https://doi.org/10.3414/me16-02-0044}.

\bibitem[Daza and Schneider(2022)]{Daza2022}
Eric~J. Daza and Logan Schneider.
\newblock Model-twin randomization (motr): A monte carlo method for estimating
  the within-individual average treatment effect using wearable sensors, 2022.
\newblock URL \url{https://arxiv.org/abs/2208.00739}.

\bibitem[Daza(2019)]{Daza2019}
Eric~Jay Daza.
\newblock Person as population: A longitudinal view of single-subject causal
  inference for analyzing self-tracked health data, 2019.
\newblock URL \url{https://arxiv.org/abs/1901.03423}.

\bibitem[Demets and Lan(1994)]{Demets1994}
David~L. Demets and K.~K.~Gordon Lan.
\newblock Interim analysis: The alpha spending function approach.
\newblock \emph{Statistics in Medicine}, 13\penalty0 (13-14):\penalty0
  1341--1352, July 1994.
\newblock \doi{10.1002/sim.4780131308}.
\newblock URL \url{https://doi.org/10.1002/sim.4780131308}.

\bibitem[Diemert et~al.(2021)Diemert, Betlei, Renaudin, Amini, Gregoir, and
  Rahier]{Diemert2021}
Eustache Diemert, Artem Betlei, Christophe Renaudin, Massih-Reza Amini,
  Théophane Gregoir, and Thibaud Rahier.
\newblock A large scale benchmark for individual treatment effect prediction
  and uplift modeling, 2021.
\newblock URL \url{https://arxiv.org/abs/2111.10106}.

\bibitem[Granger(1969)]{granger1969}
Clive~W.J. Granger.
\newblock Investigating causal relations by econometric models and
  cross-spectral methods.
\newblock \emph{Econometrica}, 37\penalty0 (3):\penalty0 424--438, 1969.
\newblock ISSN 00129682, 14680262.
\newblock URL \url{http://www.jstor.org/stable/1912791}.

\bibitem[Ham et~al.(2022)Ham, Bojinov, Lindon, and Tingley]{ham2022design}
Dae~Woong Ham, Iavor Bojinov, Michael Lindon, and Martin Tingley.
\newblock Design-based confidence sequences for anytime-valid causal inference.
\newblock \emph{arXiv preprint arXiv:2210.08639}, 2022.

\bibitem[Haybittle(1971)]{Haybittle1971}
J.~L. Haybittle.
\newblock Repeated assessment of results in clinical trials of cancer
  treatment.
\newblock \emph{The British Journal of Radiology}, 44\penalty0 (526):\penalty0
  793--797, October 1971.
\newblock \doi{10.1259/0007-1285-44-526-793}.
\newblock URL \url{https://doi.org/10.1259/0007-1285-44-526-793}.

\bibitem[Hirano et~al.(2003)Hirano, Imbens, and Ridder]{hirano2003}
Keisuke Hirano, Guido~W. Imbens, and Geert Ridder.
\newblock Efficient estimation of average treatment effects using the estimated
  propensity score.
\newblock \emph{Econometrica}, 71\penalty0 (4):\penalty0 1161--1189, 2003.
\newblock \doi{https://doi.org/10.1111/1468-0262.00442}.
\newblock URL
  \url{https://onlinelibrary.wiley.com/doi/abs/10.1111/1468-0262.00442}.

\bibitem[Horvitz and Thompson(1952)]{horvitz1952}
Daniel~G. Horvitz and Donovan~J. Thompson.
\newblock A generalization of sampling without replacement from a finite
  universe.
\newblock \emph{Journal of the American Statistical Association}, 47\penalty0
  (260):\penalty0 663--685, 1952.
\newblock ISSN 01621459.
\newblock URL \url{http://www.jstor.org/stable/2280784}.

\bibitem[Howard et~al.(2021)Howard, Ramdas, McAuliffe, and
  Sekhon]{howard2021time}
Steven~R. Howard, Aaditya Ramdas, Jon McAuliffe, and Jasjeet Sekhon.
\newblock {Time-uniform, nonparametric, nonasymptotic confidence sequences}.
\newblock \emph{The Annals of Statistics}, 49\penalty0 (2):\penalty0 1055 --
  1080, 2021.
\newblock \doi{10.1214/20-AOS1991}.
\newblock URL \url{https://doi.org/10.1214/20-AOS1991}.

\bibitem[Hájek(1971)]{hajek1971}
Jaroslav Hájek.
\newblock Comment on ``an essay on the logical foundations of survey sampling''
  by basu, d.
\newblock \emph{The foundations of survey sampling}, \penalty0 (236):\penalty0
  236, 1971.

\bibitem[Imbens and Rubin(2015)]{imbens_rubin_2015}
Guido~W. Imbens and Donald~B. Rubin.
\newblock \emph{Causal Inference for Statistics, Social, and Biomedical
  Sciences: An Introduction}.
\newblock Cambridge University Press, 2015.
\newblock \doi{10.1017/CBO9781139025751}.

\bibitem[Ioannidis et~al.(2004)Ioannidis, Evans, Gøtzsche, O'Neill, Altman,
  Schulz, Moher, and Group]{Ioannidis2004}
John~P.A. Ioannidis, Stephen J~W Evans, Peter~C Gøtzsche, Robert~T O'Neill,
  Douglas~G Altman, Kenneth Schulz, David Moher, and CONSORT Group.
\newblock Better reporting of harms in randomized trials: An extension of the
  {CONSORT} statement.
\newblock \emph{Annals of Internal Medicine}, 141\penalty0 (10):\penalty0 781,
  November 2004.
\newblock \doi{10.7326/0003-4819-141-10-200411160-00009}.
\newblock URL \url{https://doi.org/10.7326/0003-4819-141-10-200411160-00009}.

\bibitem[Kim et~al.(2019)Kim, Hu, Achkar, Black, Douville, Larson, Pendergast,
  Goldkind, Lee, Kuniholm, Soucy, Vaze, Belur, Fredriksen, Stojkovska,
  Tsytsykova, Armant, DiDonato, Choi, Cornelissen, Pereira, Augustine, Genetti,
  Dies, Barton, Williams, Goodlett, Riley, Pasternak, Berry, Pflock, Chu, Reed,
  Tyndall, Agrawal, Beggs, Grant, Urion, Snyder, Waisbren, Poduri, Park,
  Patterson, Biffi, Mazzulli, Bodamer, Berde, and Yu]{Kim2019}
Jinkuk Kim, Chunguang Hu, Christelle Moufawad~El Achkar, Lauren~E. Black, Julie
  Douville, Austin Larson, Mary~K. Pendergast, Sara~F. Goldkind, Eunjung~A.
  Lee, Ashley Kuniholm, Aubrie Soucy, Jai Vaze, Nandkishore~R. Belur, Kristina
  Fredriksen, Iva Stojkovska, Alla Tsytsykova, Myriam Armant, Renata~L.
  DiDonato, Jaejoon Choi, Laura Cornelissen, Luis~M. Pereira, Erika~F.
  Augustine, Casie~A. Genetti, Kira Dies, Brenda Barton, Lucinda Williams,
  Benjamin~D. Goodlett, Bobbie~L. Riley, Amy Pasternak, Emily~R. Berry,
  Kelly~A. Pflock, Stephen Chu, Chantal Reed, Kimberly Tyndall, Pankaj~B.
  Agrawal, Alan~H. Beggs, P.~Ellen Grant, David~K. Urion, Richard~O. Snyder,
  Susan~E. Waisbren, Annapurna Poduri, Peter~J. Park, Al~Patterson, Alessandra
  Biffi, Joseph~R. Mazzulli, Olaf Bodamer, Charles~B. Berde, and Timothy~W. Yu.
\newblock Patient-customized oligonucleotide therapy for a rare genetic
  disease.
\newblock \emph{New England Journal of Medicine}, 381\penalty0 (17):\penalty0
  1644--1652, October 2019.
\newblock \doi{10.1056/nejmoa1813279}.
\newblock URL \url{https://doi.org/10.1056/nejmoa1813279}.

\bibitem[Konigorski et~al.(2022)Konigorski, Wernicke, Slosarek, Zenner,
  Strelow, Ruether, Henschel, Manaswini, Pottb\"{a}cker, Edelman, Owoyele,
  Danieletto, Golden, Zweig, Nadkarni, and B\"{o}ttinger]{Konigorski2022}
Stefan Konigorski, Sarah Wernicke, Tamara Slosarek, Alexander~M Zenner, Nils
  Strelow, Darius~F Ruether, Florian Henschel, Manisha Manaswini, Fabian
  Pottb\"{a}cker, Jonathan~A Edelman, Babajide Owoyele, Matteo Danieletto,
  Eddye Golden, Micol Zweig, Girish~N Nadkarni, and Erwin B\"{o}ttinger.
\newblock {StudyU}: A platform for designing and conducting innovative digital
  n-of-1 trials.
\newblock \emph{Journal of Medical Internet Research}, 24\penalty0
  (7):\penalty0 e35884, July 2022.
\newblock \doi{10.2196/35884}.
\newblock URL \url{https://doi.org/10.2196/35884}.

\bibitem[Malenica et~al.(2021)Malenica, {Aurelien F. Bibaut}, and {Mark J. van
  der Laan}]{malenica2021adaptive}
Ivana Malenica, {Aurelien F. Bibaut}, and {Mark J. van der Laan}.
\newblock Adaptive sequential design for a single time-series.
\newblock 2021.
\newblock arXiv:2102.00102.

\bibitem[Msaouel et~al.(2022)Msaouel, Lee, Karam, and Thall]{Msaouel2022}
Pavlos Msaouel, Juhee Lee, Jose~A. Karam, and Peter~F. Thall.
\newblock A causal framework for making individualized treatment decisions in
  oncology.
\newblock \emph{Cancers}, 14\penalty0 (16):\penalty0 3923, August 2022.
\newblock \doi{10.3390/cancers14163923}.
\newblock URL \url{https://doi.org/10.3390/cancers14163923}.

\bibitem[Nikles and Mitchell(2015)]{Nikles2015}
Jane Nikles and Geoffrey Mitchell, editors.
\newblock \emph{The Essential Guide to N-of-1 Trials in Health}.
\newblock Springer Netherlands, 2015.
\newblock \doi{10.1007/978-94-017-7200-6}.
\newblock URL \url{https://doi.org/10.1007/978-94-017-7200-6}.

\bibitem[O{\textquotesingle}Brien and Fleming(1979)]{OBrien1979}
Peter~C. O{\textquotesingle}Brien and Thomas~R. Fleming.
\newblock A multiple testing procedure for clinical trials.
\newblock \emph{Biometrics}, 35\penalty0 (3):\penalty0 549, September 1979.
\newblock \doi{10.2307/2530245}.
\newblock URL \url{https://doi.org/10.2307/2530245}.

\bibitem[Peto et~al.(1976)Peto, Pike, Armitage, Breslow, Cox, Howard, Mantel,
  McPherson, Peto, and Smith]{Peto1976}
R~Peto, M~C Pike, P~Armitage, N~E Breslow, D~R Cox, S~V Howard, N~Mantel,
  K~McPherson, J~Peto, and P~G Smith.
\newblock Design and analysis of randomized clinical trials requiring prolonged
  observation of each patient. i. introduction and design.
\newblock \emph{British Journal of Cancer}, 34\penalty0 (6):\penalty0 585--612,
  December 1976.
\newblock \doi{10.1038/bjc.1976.220}.
\newblock URL \url{https://doi.org/10.1038/bjc.1976.220}.

\bibitem[Pocock(1977)]{Pocock1977}
S.~J. Pocock.
\newblock Group sequential methods in the design and analysis of clinical
  trials.
\newblock \emph{Biometrika}, 64\penalty0 (2):\penalty0 191--199, August 1977.
\newblock \doi{10.1093/biomet/64.2.191}.
\newblock URL \url{https://doi.org/10.1093/biomet/64.2.191}.

\bibitem[Porcino et~al.(2020)Porcino, Shamseer, Chan, Kravitz, Orkin, Punja,
  Ravaud, Schmid, and Vohra]{Porcino2020}
Antony~J Porcino, Larissa Shamseer, An-Wen Chan, Richard~L Kravitz, Aaron
  Orkin, Salima Punja, Philippe Ravaud, Christopher~H Schmid, and Sunita Vohra.
\newblock {SPIRIT} extension and elaboration for n-of-1 trials: {SPENT} 2019
  checklist.
\newblock \emph{{BMJ}}, page m122, February 2020.
\newblock \doi{10.1136/bmj.m122}.
\newblock URL \url{https://doi.org/10.1136/bmj.m122}.

\bibitem[Robbins(1970)]{robbins1970}
Herbert Robbins.
\newblock Statistical methods related to the law of the iterated logarithm.
\newblock \emph{The Annals of Mathematical Statistics}, 41\penalty0
  (5):\penalty0 1397--1409, 1970.
\newblock ISSN 00034851.
\newblock URL \url{http://www.jstor.org/stable/2239848}.

\bibitem[Robins et~al.(2000)Robins, Hernán, and Brumback]{robins2000}
James~M. Robins, Miguel~A. Hernán, and Babette Brumback.
\newblock {{M}arginal structural models and causal inference in epidemiology}.
\newblock \emph{Epidemiology}, 11\penalty0 (5):\penalty0 550--560, Sep 2000.

\bibitem[Selukar(2021)]{Selukar2021}
Subodh~R. Selukar.
\newblock \emph{Practical Considerations for Modern Clinical Trials: Three
  Projects in Clinical Trial Design, Conduct and Analysis}.
\newblock PhD thesis, University of Washington, 2021.
\newblock URL
  \url{https://digital.lib.washington.edu/researchworks/handle/1773/48208}.

\bibitem[Seydel(2023)]{Seydel2023}
Caroline Seydel.
\newblock Personalized medicine is having its day.
\newblock \emph{Nature Biotechnology}, 41\penalty0 (4):\penalty0 441--446,
  March 2023.
\newblock \doi{10.1038/s41587-023-01724-9}.
\newblock URL \url{https://doi.org/10.1038/s41587-023-01724-9}.

\bibitem[Shalit et~al.(2017)Shalit, Johansson, and Sontag]{Shalit2017}
Uri Shalit, Fredrik~D. Johansson, and David Sontag.
\newblock Estimating individual treatment effect: Generalization bounds and
  algorithms.
\newblock In \emph{Proceedings of the 34th International Conference on Machine
  Learning - Volume 70}, ICML'17, page 3076–3085. JMLR.org, 2017.

\bibitem[Smith et~al.(2020)Smith, Chimedza, and B\"{u}hrmann]{Smith2020}
Bevan~I. Smith, Charles Chimedza, and Jacoba~H. B\"{u}hrmann.
\newblock Global and individual treatment effects using machine learning
  methods.
\newblock \emph{International Journal of Artificial Intelligence in Education},
  30\penalty0 (3):\penalty0 431--458, July 2020.
\newblock \doi{10.1007/s40593-020-00203-5}.
\newblock URL \url{https://doi.org/10.1007/s40593-020-00203-5}.

\bibitem[Strassen(1967)]{strassen1967}
Volker Strassen.
\newblock Almost sure behavior of sums of independent random variables and
  martingales.
\newblock 1967.
\newblock URL \url{https://api.semanticscholar.org/CorpusID:117021204}.

\bibitem[Taylor et~al.(2018)Taylor, Sano, Ferguson, Mohan, and
  Picard]{Taylor2018}
Sara Taylor, Akane Sano, Craig Ferguson, Akshay Mohan, and Rosalind Picard.
\newblock {QuantifyMe}: An open-source automated single-case experimental
  design platform.
\newblock \emph{Sensors}, 18\penalty0 (4):\penalty0 1097, April 2018.
\newblock \doi{10.3390/s18041097}.
\newblock URL \url{https://doi.org/10.3390/s18041097}.

\bibitem[van Amsterdam et~al.(2022)van Amsterdam, Verhoeff, Harlianto,
  Bartholomeus, Puli, de~Jong, Leiner, van Lindert, Eijkemans, and
  Ranganath]{vanAmsterdam2022}
Wouter A.~C. van Amsterdam, Joost. J.~C. Verhoeff, Netanja~I. Harlianto,
  Gijs~A. Bartholomeus, Aahlad~Manas Puli, Pim~A. de~Jong, Tim Leiner, Anne
  S.~R. van Lindert, Marinus J.~C. Eijkemans, and Rajesh Ranganath.
\newblock Individual treatment effect estimation in the presence of unobserved
  confounding using proxies: a cohort study in stage {III} non-small cell lung
  cancer.
\newblock \emph{Scientific Reports}, 12\penalty0 (1), April 2022.
\newblock \doi{10.1038/s41598-022-09775-9}.
\newblock URL \url{https://doi.org/10.1038/s41598-022-09775-9}.

\bibitem[van Kruijsdijk et~al.(2014)van Kruijsdijk, Visseren, Dingemans, Groen,
  Aerts, van~der Graaf, and Smit]{vanKruijsdijk2014}
R.C.M. van Kruijsdijk, F.L.J. Visseren, A.C. Dingemans, H.J.M. Groen, J.G.
  Aerts, Y.~van~der Graaf, and E.F. Smit.
\newblock Prediction of treatment effects for individual cancer patients using
  data from randomized clinical trials.
\newblock \emph{Annals of Oncology}, 25:\penalty0 iv423, September 2014.
\newblock \doi{10.1093/annonc/mdu348.21}.
\newblock URL \url{https://doi.org/10.1093/annonc/mdu348.21}.

\bibitem[Verstraete et~al.(2021)Verstraete, Das, Gyselinck, Vos, and
  Janssens]{VerstraetePA3450}
Kenneth Verstraete, Nilakash Das, Iwein Gyselinck, Maarten~De Vos, and Wim
  Janssens.
\newblock Machine learning for estimating individual treatment effects in
  randomized controlled trials.
\newblock \emph{European Respiratory Journal}, 58\penalty0 (suppl 65), 2021.
\newblock ISSN 0903-1936.
\newblock \doi{10.1183/13993003.congress-2021.PA3450}.
\newblock URL \url{https://erj.ersjournals.com/content/58/suppl_65/PA3450}.

\bibitem[Ville(1939)]{ville1939}
Jean Ville.
\newblock \emph{Étude critique de la notion de collectif}.
\newblock 1939.
\newblock URL \url{http://eudml.org/doc/192893}.

\bibitem[Vohra et~al.(2015)Vohra, Shamseer, Sampson, Bukutu, Schmid, Tate,
  Nikles, Zucker, Kravitz, Guyatt, Altman, and and]{Vohra2015}
S.~Vohra, L.~Shamseer, M.~Sampson, C.~Bukutu, C.~H. Schmid, R.~Tate, J.~Nikles,
  D.~R. Zucker, R.~Kravitz, G.~Guyatt, D.~G. Altman, and D.~Moher and.
\newblock {CONSORT} extension for reporting n-of-1 trials ({CENT}) 2015
  statement.
\newblock \emph{{BMJ}}, 350\penalty0 (may14 17):\penalty0 h1738--h1738, May
  2015.
\newblock \doi{10.1136/bmj.h1738}.
\newblock URL \url{https://doi.org/10.1136/bmj.h1738}.

\bibitem[Wald(1945)]{wald1945}
A.~Wald.
\newblock Sequential tests of statistical hypotheses.
\newblock \emph{The Annals of Mathematical Statistics}, 16\penalty0
  (2):\penalty0 117--186, 1945.
\newblock ISSN 00034851.
\newblock URL \url{http://www.jstor.org/stable/2235829}.

\bibitem[Waudby-Smith et~al.(2023)Waudby-Smith, Arbour, Sinha, Kennedy, and
  Ramdas]{waudbysmith2023}
Ian Waudby-Smith, David Arbour, Ritwik Sinha, Edward~H. Kennedy, and Aaditya
  Ramdas.
\newblock Time-uniform central limit theory and asymptotic confidence
  sequences, 2023.
\newblock arXiv:2103.06476.

\bibitem[Yang et~al.(2021)Yang, Steingrimsson, and Schmid]{yang2021sample}
Jiabei Yang, Jon~A. Steingrimsson, and Christopher~H. Schmid.
\newblock Sample size calculations for n-of-1 trials.
\newblock \emph{arXiv preprint arXiv:2110.08970}, 2021.

\bibitem[Zenner et~al.(2022)Zenner, B\"{o}ttinger, and Konigorski]{Zenner2022}
Alexander~M. Zenner, Erwin B\"{o}ttinger, and Stefan Konigorski.
\newblock {StudyMe}: a new mobile app for user-centric n-of-1 trials.
\newblock \emph{Trials}, 23\penalty0 (1), December 2022.
\newblock \doi{10.1186/s13063-022-06893-7}.
\newblock URL \url{https://doi.org/10.1186/s13063-022-06893-7}.

\bibitem[Zhang et~al.(2017)Zhang, Le, Liu, Zhou, and Li]{Zhang2017}
Weijia Zhang, Thuc~Duy Le, Lin Liu, Zhi-Hua Zhou, and Jiuyong Li.
\newblock Mining heterogeneous causal effects for personalized cancer
  treatment.
\newblock \emph{Bioinformatics}, 33\penalty0 (15):\penalty0 2372--2378, March
  2017.
\newblock \doi{10.1093/bioinformatics/btx174}.
\newblock URL \url{https://doi.org/10.1093/bioinformatics/btx174}.

\bibitem[Zucker et~al.(2010)Zucker, Ruthazer, and Schmid]{Zucker2010}
Deborah~R. Zucker, Robin Ruthazer, and Christopher~H. Schmid.
\newblock Individual (n-of-1) trials can be combined to give population
  comparative treatment effect estimates: methodologic considerations.
\newblock \emph{Journal of Clinical Epidemiology}, 63\penalty0 (12):\penalty0
  1312--1323, December 2010.
\newblock \doi{10.1016/j.jclinepi.2010.04.020}.
\newblock URL \url{https://doi.org/10.1016/j.jclinepi.2010.04.020}.

\end{thebibliography}

\clearpage
\appendix

\section{Proof of \lemmaref{lemma::ICEpsi}}\label{apd:proof1}

The proof of \lemmaref{lemma::ICEpsi} follows from classical results in design-based causal inference and results in Appendix A of \cite{bojinov2019time}.

\vspace{3mm}
\begin{proof}[Lemma 1]
Let $g^0(A_{k,1}) = 1 - g(A_{k,1})$, $\mathds{1}_{1} = \mathds{1}({A}_{k,1} = 1)$, and $\mathds{1}_{0} = \mathds{1}({A}_{k,1} = 0)$. Then, 
\begin{align*}
\mathbb{E}(\hat{\psi}_k &\mid \mathcal{F}_{k-1}) \\
&= \mathbb{E}\left(\frac{\mathds{1}_{1} f({Y_{k,1:T}})}{g(A_{k,1})} - \frac{\mathds{1}_{0} f({Y_{k,1:T}})}{g^0(A_{k,1})} \mid \mathcal{F}_{k-1}\right) \\
&= \frac{g(A_{k,1}) Y_k(\Bar{a}_{k-1}, {a}_k = \textbf{1})}{g(A_{k,1})} \\
&\;\;\;\;\;\;\;\; - \frac{g^0(A_{k,1}) Y_k(\Bar{a}_{k-1}, {a}_k = \textbf{0})}{g^0(A_{k,1})} \\
&= Y_k(\Bar{a}_{k-1}, {a}_k = \textbf{1}) - Y_k(\Bar{a}_{k-1}, {a}_k = \textbf{0}) \\
&= \psi_k(\Bar{a}_{k-1}).
\end{align*} 
The second equality follows as $Y_k(\bar a_{k}) = f(Y_{k,1}(\bar a_{k,1}), \ldots , Y_{k,T}(\bar a_{k,T}))$ and $Y_k(\Bar{a}_{k-1}, {a}_k = \textbf{1}) = f(Y_{k,1}(\bar a_{k-1},1), \ldots , Y_{k,T}(\bar a_{k-1},\textbf{1}))$. Therefore we have that $\mathbb{E}(\hat{\psi}_k - \psi_k(\Bar{a}_{k-1}) | \mathcal{F}_{k-1}) = 0$ and $\mathbb{E}(|\hat{\psi}_k - \psi_k(\Bar{a}_{k-1})|) < \infty$. Note that the errors form a martingale difference sequence, and are uncorrelated through time. We now proceed to derive an upper bound on the variance of the proposed estimator. First, we derive the closed form expression for $\text{Var}(\hat{\psi}_k - \psi_k(\Bar{a}_{k-1}) \mid \mathcal{F}_{k-1})$. Note that
\begin{align*}
&\text{Var}(\hat{\psi}_k - \psi_k(\Bar{a}_{k-1}) \mid \mathcal{F}_{k-1}) \\
&= \mathbb{E}[(\frac{f(Y_{k,1:T})}{g(A_{k,1})}(\mathds{1}_{1} - g(A_{k,1})) \\
&\qquad\qquad\quad 
- \frac{f(Y_{k,1:T})}{g^0(A_{k,1})}(\mathds{1}_{0} - g^0(A_{k,1})))^2 \mid \mathcal{F}_{k-1}] \\
&= \mathbb{E}[\underbrace{\left(\frac{f(Y_{k,1:T})}{g(A_{k,1})}(\mathds{1}_{1} - g(A_{k,1}))\right)^2}_{\text{Term 1}} \\
&\;\;\;\;
+ \underbrace{\left(\frac{f(Y_{k,1:T})}{g^0(A_{k,1})}(\mathds{1}_{0} - g^0(A_{k,1}))\right)^2}_{\text{Term 2}} \\
&- \underbrace{2\frac{f(Y_{k,1:T})}{g(A_{k,1})}(\mathds{1}_{1} - g(A_{k,1})) \frac{f(Y_{k,1:T})}{g^0(A_{k,1})}(\mathds{1}_{0} - g^0(A_{k,1}))}_{\text{Term 3}} \mid \mathcal{F}_{k-1}].
\end{align*}

\noindent
We look at each term separately. It follows that Term 1 equals
\begin{align*}
\mathbb{E}[&\left(\frac{f(Y_{k,1:T})}{g(A_{k,1})}(\mathds{1}_{1} - g(A_{k,1}))\right)^2 \mid \mathcal{F}_{k-1}] \\
&= \frac{Y_k(\Bar{a}_{k-1}, {a}_k = \textbf{1})^2}{g(A_{k,1})^2} \mathbb{E}[(\mathds{1}_{1} - g(A_{k,1}))^2 \mid \mathcal{F}_{k-1}] \\
&= \frac{Y_k(\Bar{a}_{k-1}, {a}_k = \textbf{1})^2}{g(A_{k,1})^2} (g(A_{k,1}) - g(A_{k,1})^2).
\end{align*}

\noindent
Similarly, we have that Term 2 equals 
\begin{align*}
\mathbb{E}&\left[\left(\frac{f(Y_{k,1:T})}{g^0(A_{k,1})}(\mathds{1}_{0} - g^0(A_{k,1}))\right)^2 \mid \mathcal{F}_{k-1}\right] \\
&= \frac{Y_k(\Bar{a}_{k-1}, {a}_k = \textbf{0})^2}{g^0(A_{k,1})^2} (g^0(A_{k,1}) - g^0(A_{k,1})^2).
\end{align*}

\noindent
Finally, we obtain the Term 3, which is as follows:
\begin{align*}
&\mathbb{E}[(-2\frac{f(Y_{k,1:T})}{g(A_{k,1})}(\mathds{1}_{1} - g(A_{k,1})) \\
&\qquad\qquad\quad \frac{f(Y_{k,1:T})}{g^0(A_{k,1})}(\mathds{1}_{0} - g^0(A_{k,1}))) \mid \mathcal{F}_{k-1}] \\
&= -2 \frac{Y_k(\Bar{a}_{k-1}, {a}_k = \textbf{1})}{g(A_{k,1})} \frac{Y_k(\Bar{a}_{k-1}, {a}_k = \textbf{0})}{g^0(A_{k,1})} \\
&\qquad\qquad\quad
\mathbb{E}[(\mathds{1}_{1} - g(A_{k,1}))(\mathds{1}_{0} - g^0(A_{k,1})) \mid \mathcal{F}_{k-1}] \\
&= 2 Y_k(\Bar{a}_{k-1}, {a}_k = \textbf{1}) Y_k(\Bar{a}_{k-1}, {a}_k = \textbf{0}).
\end{align*}

\noindent
Combining all terms and using the fact that $a^2 + b^2 \geq 2ab$ (where $a=Y_k(\Bar{a}_{k-1}, {a}_k = \textbf{1})$ and $b=Y_k(\Bar{a}_{k-1}, {a}_k = \textbf{0})$, we obtain the following upper bound on the variance:
\begin{align*}
\text{Var}(&\hat{\psi}_k - \psi_k(\Bar{a}_{k-1}) \mid \mathcal{F}_{k-1}) \\
&= \frac{Y_k(\Bar{a}_{k-1}, {a}_k = \textbf{1})^2}{g(A_{k,1})^2} (g(A_{k,1}) - g(A_{k,1})^2) \\
&+ \frac{Y_k(\Bar{a}_{k-1}, {a}_k = \textbf{0})^2}{g^0(A_{k,1})^2} (g^0(A_{k,1}) - g^0(A_{k,1})^2) \\
&+ 2 Y_k(\Bar{a}_{k-1}, {a}_k = \textbf{1}) Y_k(\Bar{a}_{k-1}, {a}_k = \textbf{0}) \\
&= \frac{Y_k(\Bar{a}_{k-1}, {a}_k = \textbf{1})^2}{g(A_{k,1})} + \frac{Y_k(\Bar{a}_{k-1}, {a}_k = \textbf{0})^2}{g^0(A_{k,1})} \\
&- (Y_k(\Bar{a}_{k-1}, {a}_k = \textbf{1}) - Y_k(\Bar{a}_{k-1}, {a}_k = \textbf{0}))^2 \\
&\leq \frac{Y_k(\Bar{a}_{k-1}, {a}_k = \textbf{1})^2}{g(A_{k,1})} + \frac{Y_k(\Bar{a}_{k-1}, {a}_k = \textbf{0})^2}{g^0(A_{k,1})}.
\end{align*}
\noindent
Note that
\begin{align*}
\mathbb{E}(\hat{\sigma}^2_k &| \mathcal{F}_{k-1}) \\
&= \mathbb{E}\left(\frac{\mathds{1}_1 f({Y_{k,1:T}})^2}{g(A_{k,1})^2} + \frac{\mathds{1}_0 f({Y_{k,1:T}})^2}{g^0(A_{k,1})^2} \mid \mathcal{F}_{k-1}\right) \\
&= \frac{g(A_{k,1}) Y_k(\Bar{a}_{k-1}, {a}_k = \textbf{1})^2}{g(A_{k,1})^2} \\
&\qquad\qquad\quad + \frac{g^0(A_{k,1}) Y_k(\Bar{a}_{k-1}, {a}_k = \textbf{0})^2}{g^0(A_{k,1})^2} \\
&= \frac{Y_k(\Bar{a}_{k-1}, {a}_k = \textbf{1})^2}{g(A_{k,1})} + \frac{Y_k(\Bar{a}_{k-1}, {a}_k = \textbf{0})^2}{g^0(A_{k,1})}, 
\end{align*}
and it follows that $\text{Var}(\hat{\psi}_k - \psi_k(\Bar{a}_{k-1}) \mid \mathcal{F}_{k-1}) \leq \mathbb{E}(\hat{\sigma}^2_k | \mathcal{F}_{k-1}) = {\sigma}^2_k$.
\end{proof}

\section{Hájek estimator of the AICE}\label{apd:hajek}

In the following, we study statistical  properties of the Hájek estimator of the running average immediate effect over treatment periods, i.e. the AICE. 

\begin{lemma}[Hájek estimator of the AICE]\label{lemma::AICEpsihajek}
Under Assumption \ref{ass::positivity}, it follows that 
\begin{equation*}
    \mathbb{E}(\tilde{\psi}_\mathrm{AICE}^{k} - \psi_\mathrm{AICE}^k | \mathcal{F}_{k-1}) = 0
\end{equation*}
and 
\begin{align*}
    \text{Var}(\tilde{\psi}_\mathrm{AICE}^{k} - \psi_\mathrm{AICE}^k | \mathcal{F}_{k-1}) &\leq \mathbb{E}(\tilde{\sigma}_\mathrm{AICE}^2 | \mathcal{F}_{k-1}).
\end{align*}
\end{lemma}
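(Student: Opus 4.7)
The plan is to mirror the argument used for Lemma~\ref{lemma::ICEpsi} in Appendix~\ref{apd:proof1}, adapted to the ratio structure of the H\'ajek estimator. Observe that $\tilde{\psi}_\mathrm{AICE}^{k}$ splits arm-wise as $\tilde{Y}_{1,k} - \tilde{Y}_{0,k}$, where
\begin{equation*}
\tilde{Y}_{a,k} := \frac{\sum_{j=1}^{k} \mathds{1}(A_{j,1}=a)\, f(Y_{j,1:T})/g_j^{(a)}}{\sum_{j=1}^{k} \mathds{1}(A_{j,1}=a)/g_j^{(a)}},
\end{equation*}
writing $g_j^{(1)}:=g(A_{j,1})$ and $g_j^{(0)}:=1-g(A_{j,1})$ for brevity. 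By linearity, it suffices to analyze each arm separately for the mean claim and to track a single cross-term for the variance claim.

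\textbf{Unbiasedness.} Conditioning on $\mathcal{F}_{k-1}$ fixes every potential outcome by the design-based convention, so that blocks $1,\dots,k-1$ contribute \emph{known} partial sums to both the numerator and the denominator of each arm; only the Bernoulli $A_{k,1}$ introduces new randomness. I would absorb those partial sums into constants, write $\tilde{\psi}_\mathrm{AICE}^{k}$ as a two-case function of $A_{k,1}$, and take the resulting Bernoulli expectation using Assumption~\ref{ass::positivity} to guarantee finiteness of the weights, together with the design-based consistency $f(Y_{k,1:T})=Y_{k}(\bar{a}_{k-1},\mathbf{1}_T)$ when $A_{k,1}=1$ and $f(Y_{k,1:T})=Y_{k}(\bar{a}_{k-1},\mathbf{0}_T)$ when $A_{k,1}=0$, exactly as in Lemma~\ref{lemma::ICEpsi}. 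Iterating the tower property from block $k$ back to block $1$ under Assumption~\ref{ass::granger}, the arm-wise contributions would telescope into $\psi_{\mathrm{AICE}}^{k}$.

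\textbf{Variance bound.} I would expand $\text{Var}(\tilde{\psi}_\mathrm{AICE}^{k} - \psi_\mathrm{AICE}^{k} \mid \mathcal{F}_{k-1})$ into three pieces in the spirit of Lemma~\ref{lemma::ICEpsi}: a squared treated-arm term, a squared control-arm term, and a negative cross-term of the form $-2\,\tilde{Y}_{1,k}\tilde{Y}_{0,k}$. Applying the elementary inequality $a^{2}+b^{2} \geq 2ab$ absorbs the cross-term, leaving an upper bound whose arm-wise pieces, after identifying the squared weights $1/(g_j^{(a)})^{2}$ in both numerator and denominator, match the components of $\tilde{\sigma}_\mathrm{AICE}^{2}$ as defined in \equationref{eqn::var_aice_hajek}.

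\textbf{Main obstacle.} Unlike the IPTW estimator in Lemma~\ref{lemma::ICEpsi}, the H\'ajek form is a \emph{ratio}, and conditional expectations do not distribute across ratios. The one-step tower argument that drove the IPTW proof therefore does not transfer verbatim; the proof must exploit the matching of numerator and denominator weights arm-wise, so that conditioning on $A_{k,1}$ produces coordinated increments whose expected contribution still reproduces the relevant potential-outcome summary. Handling this ratio interaction cleanly is the step I expect to demand the most care; the remaining calculations are direct adaptations of those in Appendix~\ref{apd:proof1}.
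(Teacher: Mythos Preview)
You have correctly isolated the crux: the H\'ajek form is a ratio, and conditional expectations do not distribute across ratios. Your variance plan (two arm-wise squared terms plus a cross-term absorbed via $a^2+b^2\ge 2ab$) matches the paper's treatment, which likewise defers to the Lemma~\ref{lemma::ICEpsi} template and omits the details. The gap is in your unbiasedness step. Conditioning on $\mathcal{F}_{k-1}$ already fixes \emph{all} of $A_{1,1},\dots,A_{k-1,1}$, so only the single Bernoulli $A_{k,1}$ remains random; there is nothing to ``iterate'' with the tower property and no telescoping can occur. You are left with a one-shot expectation over $A_{k,1}$ of a ratio whose numerator and denominator both jump, and the two realizations do not in general recombine to $\psi_{\mathrm{AICE}}^{k}$.

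The paper's proof, for its part, does not supply the careful coordinated-increment argument you anticipate needing. It simply replaces every indicator $\mathds{1}(A_{j,1}=a)$ by its mean simultaneously in both numerator and denominator, for all $j\le k$, arriving at $\tfrac{1}{k}\sum_{j\le k} Y_j(\bar a_{j-1},a_j=\textbf{1})$ (and the analogous control average) immediately---that is, it applies the expectation termwise inside the ratio, which is precisely the move you flagged as illegitimate. So the obstacle you identify is genuine, but the paper sidesteps rather than resolves it; if your aim is to reproduce the paper's argument, that termwise replacement is the entire mechanism.
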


\begin{proof}[Lemma 2]
Let $g^0(A_{j,1}) = 1 - g(A_{j,1})$, $\mathds{1}_{1} = \mathds{1}({A}_{j,1} = 1)$, and $\mathds{1}_{0} = \mathds{1}({A}_{j,1} = 0)$. Then, 
\begin{align*}
&\mathbb{E}(\tilde{\psi}_\mathrm{AICE}^{k} \mid \mathcal{F}_{k-1}) \\
&= \mathbb{E}(
\frac{\sum_{j=1}^k \mathds{1}_{1} f({Y_{j,1:T}})/g(A_{j,1})}{\sum_{j=1}^k\mathds{1}_{1}/g(A_{j,1})} \\
&\;\;\;\; - \frac{\sum_{j=1}^k \mathds{1}_{0} f({Y_{j,1:T}})/g^0(A_{j,1})}{\sum_{j=1}^k\mathds{1}_{0}/g^0(A_{j,1})} \mid \mathcal{F}_{k-1}) \\
&= \frac{\sum_{j=1}^k g(A_{j,1}) Y_j(\Bar{a}_{j-1},a_j = \textbf{1})/g(A_{j,1})}{\sum_{j=1}^k g(A_{j,1})/g(A_{j,1})} \\
&\;\;\;\; - \frac{\sum_{j=1}^k g^0(A_{j,1}) Y_j(\Bar{a}_{j-1},a_j = \textbf{0})/g^0(A_{j,1})}{\sum_{j=1}^k g^0(A_{j,1})/g^0(A_{j,1})} \\
&= \frac{1}{k} \sum_{j=1}^k (Y_j(\Bar{a}_{j-1}, {a}_j = \textbf{1}) - Y_j(\Bar{a}_{j-1}, {a}_j = \textbf{0})) \\
&= \psi_\mathrm{AICE}^k.
\end{align*}  
Obtaining the upper bound on the variance estimator tracks derivation already presented in the proof of \lemmaref{lemma::ICEpsi}, therefore we omit the details. It follows that  
\begin{align*}
&\text{Var}(\tilde{\psi}_\mathrm{AICE}^{k} - \psi_\mathrm{AICE}^k | \mathcal{F}_{k-1}) \\
&= \frac{\sum_{j=1}^k Y_j(\Bar{a}_{j-1}, {a}_k = \textbf{1})^2 / g(A_{j,1})}{\sum_{j=1}^k 1/ g(A_{j,1})} \\
&\;\;\;\;\;\;\;\; + \frac{\sum_{j=1}^k Y_j(\Bar{a}_{j-1}, {a}_k = \textbf{0})^2 / g^0(A_{j,1})}{\sum_{j=1}^k 1/ g^0(A_{j,1})} \\
&\;\;\;\;\;\;\;\; - (\sum_{j=1}^k Y_j(\Bar{a}_{j-1}, {a}_k = \textbf{1}) - \sum_{j=1}^k Y_j(\Bar{a}_{j-1}, {a}_k = \textbf{0}))^2 \\
&\leq \frac{\sum_{j=1}^k Y_j(\Bar{a}_{j-1}, {a}_k = \textbf{1})^2 / g(A_{j,1})}{\sum_{j=1}^k 1/ g(A_{j,1})} \\
&\;\;\;\;\;\;\;\; + \frac{\sum_{j=1}^k Y_j(\Bar{a}_{j-1}, {a}_k = \textbf{0})^2 / g^0(A_{j,1})}{\sum_{j=1}^k 1/ g^0(A_{j,1})}. 
\end{align*}
Since we have that
\begin{align*}
\mathbb{E}(&\tilde{\sigma}_\mathrm{AICE}^2 | \mathcal{F}_{k-1}) \\
&= \mathbb{E}(
\frac{\sum_{j=1}^k \mathds{1}_{1} f({Y_{j,1:T}})^2/g(A_{j,1})^2}{\sum_{j=1}^k\mathds{1}_{1}/g(A_{j,1})^2} \\
&\;\;\;\;\;\;\;\; - \frac{\sum_{j=1}^k \mathds{1}_{0} f({Y_{j,1:T}})^2/g^0(A_{j,1})^2}{\sum_{j=1}^k\mathds{1}_{0}/g^0(A_{j,1})^2} \mid \mathcal{F}_{k-1}) \\
&= \frac{\sum_{j=1}^k Y_j(\Bar{a}_{j-1}, {a}_k = \textbf{1})^2 / g(A_{j,1})}{\sum_{j=1}^k 1/ g(A_{j,1})} \\
&\;\;\;\;\;\;\;\; + \frac{\sum_{j=1}^k Y_j(\Bar{a}_{j-1}, {a}_k = \textbf{0})^2 / g^0(A_{j,1})}{\sum_{j=1}^k 1/ g^0(A_{j,1})}, 
\end{align*}
it follows that $\text{Var}(\tilde{\psi}_\mathrm{AICE}^{k} - \psi_\mathrm{AICE}^k | \mathcal{F}_{k-1}) \leq \mathbb{E}(\tilde{\sigma}_\mathrm{AICE}^2 | \mathcal{F}_{k-1})$.
\end{proof}

\section{Proof of \theoremref{thm::ConfidenceSeq}}\label{apd:theorem1}

The proof of \theoremref{thm::ConfidenceSeq} can be obtained as in \cite{ham2022design} (Theorem 5.2) and \cite{waudbysmith2023} (Theorem 2.3), adapted to the N-of-1 setting and proposed estimators. First, we state the Ville's maximal inequality in \lemmaref{lemma:ville}, as it is used in the proof \citep{ville1939}. We also use Theorem 4.4 of \cite{strassen1967}, adapted to our setting (same adaptation as in \cite{ham2022design}).

\begin{lemma}[Ville's Maximal Inequality]\label{lemma:ville}
Let $M_t$ denote a non-negative martingale with respect to a filtration $\mathcal{F}$. Then, 
\begin{equation*}
    P(\exists t \in \mathbb{N}_0 : M_t \geq \frac1\alpha) \leq \alpha M_0.
\end{equation*}
For $M_0=1$, we obtain the uniform type-1 error guarantee.
\end{lemma}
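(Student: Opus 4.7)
The plan is to prove Ville's maximal inequality via the classical stopping-time reduction, which converts the uniform-in-time event $\{\exists t : M_t \geq 1/\alpha\}$ into a single-time event evaluated at a well-chosen stopping time, to which a Markov-style bound then applies. The essential ingredient is that non-negativity of the martingale lets us combine optional stopping at a truncated stopping time with Fatou's lemma to control the value of $M$ at the (possibly unbounded) first-exceedance time.

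First, I would introduce the stopping time $\tau := \inf\{t \in \mathbb{N}_0 : M_t \geq 1/\alpha\}$, with the usual convention $\inf \emptyset := \infty$. Since $M$ is adapted to $\mathcal{F}$, $\tau$ is a stopping time, and the event of interest coincides with $\{\tau < \infty\}$. For each finite $n \in \mathbb{N}_0$, the stopping time $\tau \wedge n$ is bounded, so the optional stopping theorem gives $\mathbb{E}[M_{\tau \wedge n}] = \mathbb{E}[M_0] = M_0$, treating $M_0$ as deterministic as implicit in the stated bound. Because $M_{\tau \wedge n} \geq 0$ and $M_{\tau \wedge n} \to M_\tau \mathds{1}_{\{\tau < \infty\}} + (\liminf_n M_n)\, \mathds{1}_{\{\tau = \infty\}}$ pointwise, Fatou's lemma yields
\[
    \mathbb{E}\bigl[M_\tau\, \mathds{1}_{\{\tau < \infty\}}\bigr] \;\leq\; \liminf_n\, \mathbb{E}[M_{\tau \wedge n}] \;=\; M_0.
\]
On $\{\tau < \infty\}$, the definition of $\tau$ guarantees $M_\tau \geq 1/\alpha$, so $\mathbb{E}[M_\tau\, \mathds{1}_{\{\tau < \infty\}}] \geq (1/\alpha)\, P(\tau < \infty)$. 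Rearranging gives $P(\tau < \infty) \leq \alpha M_0$, which is exactly the claim; setting $M_0 = 1$ recovers the stated uniform type-1 error guarantee.

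The main subtlety, such as it is, is the Fatou step: one must note that non-negativity of $M$ is precisely what allows the liminf inequality to go in the right direction (Fatou would be useless for signed martingales here), and that we do not need $M_\tau$ to be integrable on $\{\tau = \infty\}$ since we discard that non-negative contribution, which only strengthens the bound. A second minor point is that the statement reads $\alpha M_0$, implicitly treating $M_0$ as deterministic; if one allows $M_0$ to be random and integrable, the same argument delivers $P(\sup_t M_t \geq 1/\alpha) \leq \alpha\, \mathbb{E}[M_0]$ with no change in structure. No martingale convergence theorem is required, because the stopping-time reduction collapses the ``$\exists t$'' into an evaluation at a single random index.
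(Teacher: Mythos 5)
Your proof is correct, but there is nothing in the paper to compare it against: the paper never proves this lemma. It states Ville's maximal inequality as a known result, citing Ville (1939), and uses it purely as a black-box ingredient in the proof of \theoremref{thm::ConfidenceSeq}. Your argument is the standard stopping-time proof of that classical result, and it is sound: $\tau := \inf\{t : M_t \geq 1/\alpha\}$ is a stopping time, optional stopping at the bounded time $\tau \wedge n$ gives $\mathbb{E}[M_{\tau \wedge n}] = M_0$, Fatou transfers this to the limit, and the Markov-style step on $\{\tau < \infty\}$ finishes. The one loose point is your claim that $M_{\tau \wedge n}$ \emph{converges pointwise} to $M_\tau \mathds{1}_{\{\tau < \infty\}} + (\liminf_n M_n)\mathds{1}_{\{\tau = \infty\}}$: on $\{\tau = \infty\}$ the sequence $M_{\tau \wedge n} = M_n$ need not converge at all unless one invokes the martingale convergence theorem, which you explicitly (and rightly) wish to avoid. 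This does not damage the argument, because Fatou only needs the pointwise inequality $\liminf_n M_{\tau \wedge n} \geq M_\tau \mathds{1}_{\{\tau < \infty\}}$, which holds since the stopped sequence is eventually constant and equal to $M_\tau$ on $\{\tau < \infty\}$ and non-negative everywhere; with that phrasing fixed, the chain $\frac{1}{\alpha}P(\tau < \infty) \leq \mathbb{E}[M_\tau \mathds{1}_{\{\tau < \infty\}}] \leq \liminf_n \mathbb{E}[M_{\tau \wedge n}] = M_0$ is airtight. Your closing remark that the inequality extends to random integrable $M_0$ with bound $\alpha\,\mathbb{E}[M_0]$ is also correct, and is the form in which the result is usually stated; the paper's version with deterministic $M_0 = 1$ is exactly what its proof of \theoremref{thm::ConfidenceSeq} consumes, since the mixture martingale constructed there has initial value one.
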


\begin{proof}[Theorem 1]
Let $X_k$ denote an i.i.d standard normal random variable. As in \cite{robbins1970}, we construct a non-negative martingale with initial value one for any $\lambda \in \mathbb{R}$ with respect to the filtration:
\begin{equation*}
    M_k(\lambda) := \exp{\left(\sum_{j=1}^k (\lambda \sigma_j X_j - \frac{\lambda^2 \sigma_j^2}{2})\right)}, 
\end{equation*}
where $\sigma_j^2$ is the true variance $\sigma_{j}^2:= \frac{Y_j(\bar a_{j-1}, \mathbf{1}_T)^2}{g(A_{j, 1})} + \frac{Y_j(\bar a_{j-1}, \mathbf{0}_T)^2}{1-g(A_{j, 1})}$. By \cite{robbins1970}, for any probability distribution $F(\lambda)$ on $\mathbb{R}$, $\int_{\lambda} M_k(\lambda) dF(\lambda)$ is still a non-negative martingale with initial value one. For a normal mixing distribution with mean $\lambda$ and variance $\eta^2$, the resulting martingale $M_k$ is written as
\begin{align*}
\frac{1}{\eta \sqrt{2\pi}} \int_{\lambda} \exp{\left(\sum_{j=1}^k (\lambda \sigma_j X_j - \frac{\lambda^2 \sigma_j^2}{2})\right)} \exp{\left(\frac{-\lambda^2}{2 \eta^2}\right)} d\lambda.
\end{align*}
Let $Z_k = \sum_{j=1}^k \sigma_j X_j$ and $\Bar{\sigma}^2_k = 1/k \sum_{j=1}^k \sigma^2_j$. Then, 
\begin{align*}
M_k &= \frac{1}{\eta \sqrt{2\pi}} \int_{\lambda} \exp{\left(\lambda Z_k - \frac{k \lambda^2 \Bar{\sigma}^2_k}{2}\right)} \exp{\left(\frac{-\lambda^2}{2 \eta^2}\right)} d\lambda \\
&= \frac{1}{\eta \sqrt{2\pi}} \int_{\lambda} \exp{\left(\frac{2\eta^2\lambda Z_k -\lambda^2(1+k\eta^2\Bar{\sigma}^2_k)}{2\eta^2}\right)} d\lambda \\
&= \frac{1}{\eta \sqrt{2\pi}} \int_{\lambda} \exp{\left(\frac{2\lambda b - \lambda^2 a}{2\eta^2} \right)}, 
\end{align*}
where $a=k\eta^2 \Bar{\sigma}^2_k + 1$ and $b=\eta^2Z_k$. We proceed to complete the square by adding and subtracting $(b/a)^2$:
\begin{align*}
\exp{\left(\frac{2\lambda b - \lambda^2 a}{2\eta^2} \right)} &=  \exp{\left(\frac{b}{\eta^2}\lambda - \frac{a}{2\eta^2}\lambda^2 \right)} \\
&= \exp{\left(
- \frac{a}{2\eta^2}(\lambda^2 - \frac{2b}{a}\lambda)\right)} \\
&= \exp{\left( - \frac{a}{2\eta^2}(\lambda - \frac{b}{a})^2 + \left(\frac{b^2}{2a\eta^2}\right) \right)}.
\end{align*}
Substituting back into $M_k$, and replacing $a$ and $b$ for their values, we have that
\begin{align*}
M_k &=  \frac{1}{\sqrt{2\pi\eta^2/a}} \int_{\lambda} \exp{\left( - \frac{a}{2\eta^2}(\lambda - \frac{b}{a})^2\right)} d\lambda \\
&\;\;\;\;\;\;\;\;\;\;\;\;\;\;\;\; \frac{1}{\sqrt{a}} \exp{\left(\frac{b^2}{2a\eta^2} \right)} \\
&= \frac{1}{\sqrt{a}} \exp{\left(\frac{b^2}{2a\eta^2} \right)} \\
&= \frac{1}{\sqrt{k\eta^2 \Bar{\sigma}^2_k + 1}} \exp{\left(\frac{\eta^2Z_k^2}{2(k\eta^2 \Bar{\sigma}^2_k +1) } \right)}.
\end{align*}
By \cite{robbins1970}, we know that $M_k$ is a non-negative martingale. Applying \lemmaref{lemma:ville}, it follows that
\begin{align*}
P(\forall k \geq 1, M_k < \frac{1}{\alpha}) \geq 1-\alpha.
\end{align*}
By basic algebra manipulation, we can show that $\frac{1}{\sqrt{k\eta^2 \Bar{\sigma}^2_k + 1}} \exp{\left(\frac{\eta^2Z_k^2}{2(k\eta^2 \Bar{\sigma}^2_k +1) } \right)} < \frac{1}{\alpha}$ simplifies to 
\begin{align}\label{eqn::villeapply}
P\left(\forall k \geq 1,  
\left|\frac{1}{k}Z_k\right| < C_k\right) \geq 1-\alpha, 
\end{align}
where 
$C_k = \sqrt{\frac{2(k\eta^2\Bar{\sigma}^2_k +1)}{k^2\eta^2} \log{\left( \frac{\sqrt{k \eta^2 \Bar{\sigma}_k^2 + 1}}{\alpha}\right)}}$.

Let $u_k = \hat{\psi}_k - \psi_k(\Bar{a}_{k-1})$. In \appendixref{apd:proof1}, we discuss that $\mathbb{E}(\hat{\psi}_k - \psi_k(\Bar{a}_{k-1}) | \mathcal{F}_{k-1}) = 0$ and $\mathbb{E}(|\hat{\psi}_k - \psi_k(\Bar{a}_{k-1})|) < \infty$, so $(u_k)_k$ is a martingale difference sequence with respect to $\mathcal{F}_{k-1}$ (and thus, uncorrelated through time). We note that the same applies for martingale difference sequence formed by the Hájek estimator, as shown in \lemmaref{lemma::AICEpsihajek}. We proceed to utilize the strong approximation theorem from \cite{strassen1967} (Theorem 4.4), which applies to martingale difference sequences of the form $\mathbb{E}(X_n | \sigma(X_1, \ldots, X_{n-1})) = 0$ (notation as in \cite{strassen1967}). In the Appendix D,  \cite{ham2022design} edit Theorem 4.4 in order to accommodate martingale difference sequences we also have, of the form $u_k$. The edited formulation of Theorem 4.4 follows due to Assumption \ref{ass::boundedpo}, and results in the following approximation:
\begin{align*}
\frac{1}{k}\sum_{j=1}^k u_j = \frac{1}{k}\sum_{j=1}^k \sigma_jX_j + o\left(\frac{\tilde{S}_k^{3/8}\log{\tilde{S}_k}}{k} \right) a.s.
\end{align*}
where we remind that $\tilde{S}_k = \sum_{j=1}^k \sigma_j^2$. Substituting back into \equationref{eqn::villeapply}, we have that
\begin{align}\label{eqn::nacsw}
P(\forall k \geq 1,  
\left|\frac{1}{k} \sum_{j=1}^k u_j\right| < C_k + o\left(\frac{\tilde{S}_k^{3/8}\log{\tilde{S}_k}}{k}\right)) \geq 1-\alpha. 
\end{align}
Note that \equationref{eqn::nacsw} is the non-asymptotic confidence width, denoted as $(I_k')_{k=1}^K$ in \definitionref{def::acs}. By Assumption \ref{ass::nonvanishingvar}, the asymptotic confidence width is 
\begin{equation}\label{eqn::acsw}
P\left(\forall k \geq 1,  
\left|\frac{1}{k} \sum_{j=1}^k u_j\right| < C_k\right) \geq 1-\alpha.    
\end{equation}
Also by Assumption \ref{ass::nonvanishingvar}, the $(1-\alpha)$ asymptotic confidence sequence for the target parameter sequence $(\psi_{\mathrm{AICE}}^k)_k$ is then
\begin{equation}\label{eqn::acs}
    \frac1k \sum_{j=1}^k\hat \psi_j \pm \sqrt{\frac{2(k\eta^2\Bar{\sigma}^2_k +1)}{k^2\eta^2} \log{\left( \frac{\sqrt{k \eta^2 \Bar{\sigma}_k^2 + 1}}{\alpha}\right)}}.
\end{equation}
Finally, note that the $(1-\alpha)$ asymptotic confidence sequence in \equationref{eqn::acs} relies on the true variance. In step 3 of Appendix D in \cite{ham2022design} and Appendix A.2 of \cite{waudbysmith2023}, they show that under further assumption of $\tilde{\sigma}^2_k \xrightarrow{a.s.}\Bar{\sigma}_k^2$, with $\tilde{\sigma}^2_k = 1/k \sum_{j=1}^k \hat{\sigma}_j^2$, we have that $C_k = \sqrt{\frac{2(k\eta^2\tilde{\sigma}^2_k +1)}{k^2\eta^2} \log{\left( \frac{\sqrt{k \eta^2 \tilde{\sigma}_k^2 + 1}}{\alpha}\right)}}$. We include the argument for clarity in what follows. Let $\Bar{\sigma}_k^2 - \tilde{\sigma}^2_k = o(\Bar{\sigma}_k^2)$. Then we have that
\begin{align*}
&\sqrt{\frac{2(k\eta^2\Bar{\sigma}^2_k +1)}{k^2\eta^2} \log{\left( \frac{\sqrt{k \eta^2 \Bar{\sigma}_k^2 + 1}}{\alpha}\right)}} \\
&= \sqrt{\frac{2(k\eta^2 (\tilde{\sigma}^2_k + o(\Bar{\sigma}_k^2)) +1)}{k^2\eta^2} \log{\left( \frac{\sqrt{k \eta^2 (\tilde{\sigma}^2_k + o(\Bar{\sigma}_k^2)) + 1}}{\alpha}\right)}} \\
&= \sqrt{\frac{k\eta^2 (\tilde{\sigma}^2_k + o(\Bar{\sigma}_k^2)) +1)}{k^2\eta^2} \log{\left( \frac{k \eta^2 (\tilde{\sigma}^2_k + o(\Bar{\sigma}_k^2)) + 1}{\alpha^2}\right)}} \\
&= \sqrt{\underbrace{\left(\frac{k\eta^2\tilde{\sigma}^2_k +1}{k^2\eta^2} + o(\Bar{\sigma}_k^2/k)\right)}_{\text{Term 1}} 
\underbrace{\log{\left( \frac{ k \eta^2 \tilde{\sigma}^2_k + o(k\Bar{\sigma}_k^2) + 1}{\alpha^2}\right)}}_{\text{Term 2}}}.
\end{align*}
We focus on Term 2 in the rest of the proof.
\begin{align*}
\log&\left( \frac{ k \eta^2 \tilde{\sigma}^2_k + o(k\Bar{\sigma}_k^2) + 1}{\alpha^2}\right) \\
&\quad\quad\quad = \log\left( \frac{ k \eta^2 \tilde{\sigma}^2_k + 1}{\alpha^2}\right) + \log{(1+o(1))} \\
&\quad\quad\quad = \log\left( \frac{ k \eta^2 \tilde{\sigma}^2_k + 1}{\alpha^2}\right) + o(1).
\end{align*}

Therefore, valid $(1-\alpha)$ asymptotic confidence sequence for $(\psi_{\mathrm{AICE}}^k)_k$ is
$$
\frac1k \sum_{j=1}^k\hat \psi_j\pm \frac1k \sqrt{\frac{\eta^2\sum_{j=1}^k\hat \sigma_j^2+1}{\eta^2} \log\bigg(\frac{\eta^2\sum_{j=1}^k\hat \sigma_j^2+1}{\alpha^2}\bigg)}.
$$

\end{proof}

\section{Supplementary Figures}\label{apd:supplfig}

\begin{figure}[H]
    \centering
    \includegraphics[width=0.49\linewidth]{images/IPTW_single.pdf}
    \includegraphics[width=0.49\linewidth]{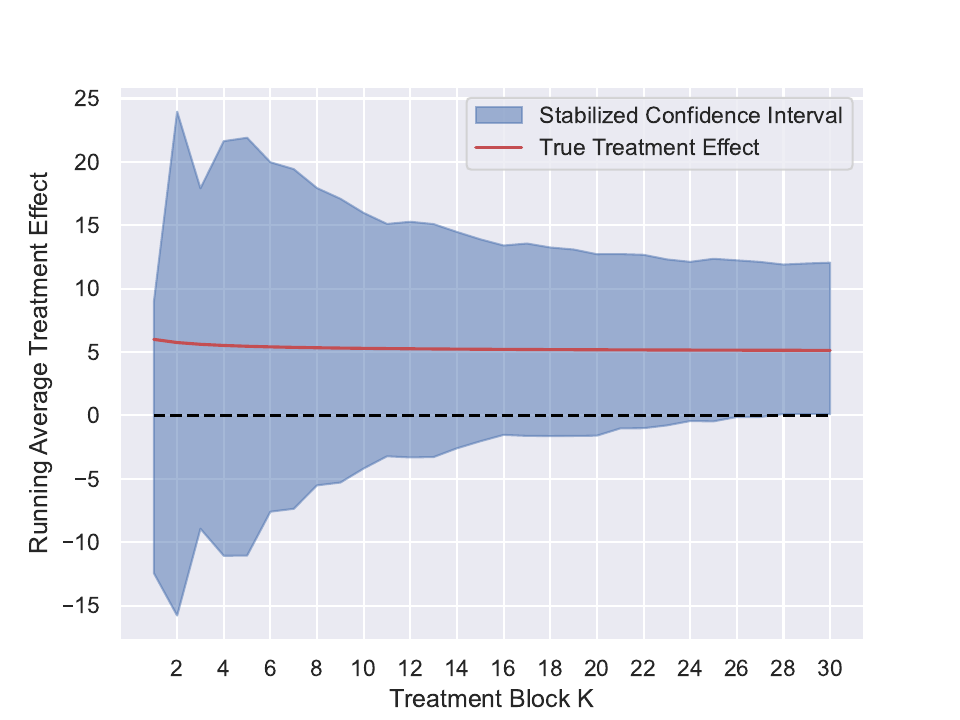}
    \includegraphics[width=0.49\linewidth]{images/pair_IPTW_single.pdf}
    \includegraphics[width=0.49\linewidth]{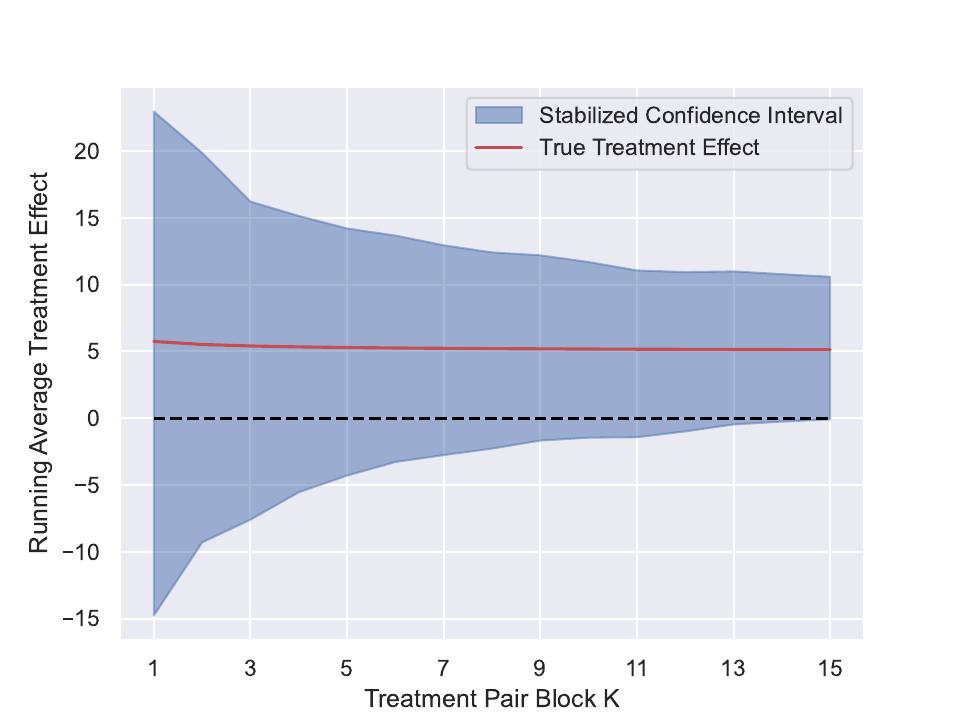}
    \caption{All-time valid confidence intervals of AICE obtained by IPTW (left) and stabilized IPTW (right), respectively, in a single run at $\alpha=0.05$. The dashed line represents the zero (null) line. Top row: unrestricted randomization scheme. Bottom row: pairwise randomization scheme.}
    \label{fig:single_runs2}
\end{figure}

\end{document}